\newtheorem{theorem}{Theorem}
\newtheorem{lemma}{Lemma}
\newtheorem{corollary}{Corollary}
\newcommand{\later}[1]{{}}
\newcommand{\old}[1]{{}}
\newcommand{\eps}{\varepsilon}
\def\etal{{\it et~al.}\,}
\def\ie{{\it i.\,e.},~}
\def\eg{{\it e.\,g.},~}
\newcommand{\conv}{{\rm conv}}
\newcommand{\supp}{{\rm supp}}
\newcommand{\ar}{\rightarrow}
\newcommand{\squishlist}{
 \begin{list}{$\bullet$}
  { \setlength{\itemsep}{0pt}
     \setlength{\parsep}{3pt}
     \setlength{\topsep}{3pt}
     \setlength{\partopsep}{0pt}
     \setlength{\leftmargin}{1.5em}
     \setlength{\labelwidth}{1em}
     \setlength{\labelsep}{0.5em} } }
\newcommand{\squishend}{
  \end{list} }
\def\H{\mathcal H}
\def\tr{{\tt tr}}
\def\sc{{\tt sc}}
\def\st{{\tt st}}
\def\cf{{\tt cf}}
\def\pm{{\tt pm}}
\def\pg{{\tt pg}}
\def\ns{\hspace*{-1ex}}
\providecommand{\intd}[0]%
{\;\mbox{d}}
\begin{document}

\title{Bounds on the maximum multiplicity\\
 of some common geometric graphs}

\author{
Adrian Dumitrescu\thanks{%
Department of Computer Science,
University of Wisconsin--Milwaukee, USA\@.
Supported in part by NSF grants CCF-0444188 and DMS-1001667.
Email:~\texttt{dumitres@uwm.edu}}
\and
Andr\'e Schulz\thanks{%
Institut f\"ur Mathematische Logik und Grundlagenforschung, Universit\"at M\"unster,
Germany\@. Partially supported by the German Research Foundation (DFG)
under grant SCHU 2458/1-1.
Email: \texttt{andre.schulz@uni-muenster.de} }
\and
Adam Sheffer\thanks{%
School of Computer Science, Tel Aviv University, Tel Aviv 69978, Israel\@.
Supported by Grant 338/09 from
the Israel Science Fund.
Email: \texttt{sheffera@tau.ac.il} }
\and
Csaba D. T\'oth\thanks{%
Department of Mathematics and Statistics,
University of Calgary, Canada\@. Supported in part by NSERC grant
RGPIN 35586.
Research by this author was conducted at Tufts University and at
Ecole Polytechnique F\'ed\'erale de Lausanne.
Email: \texttt{cdtoth@ucalgary.ca} }
}

\maketitle


\begin{abstract}
We obtain new lower and upper bounds for the maximum multiplicity of
some weighted and, respectively, non-weighted common geometric graphs
drawn on $n$ points in the plane in general position (with no three
points collinear): perfect matchings, spanning trees,
spanning cycles (tours), and triangulations.

(i) We present a new lower bound construction for the maximum number of
triangulations a set of $n$ points in general position can have. In
particular, we show that a {\em generalized double chain} formed by two
{\em almost convex} chains admits $\Omega (8.65^n)$ different
triangulations. This improves the bound $\Omega (8.48^n)$ achieved by
the previous best construction, the \emph{double zig-zag chain}
studied by Aichholzer~\etal

(ii) We obtain a new lower bound of $\Omega(12.00^n)$ for the number of
{\em non-crossing} spanning trees of the {\em double chain} composed of two
{\em convex} chains. The previous bound, $\Omega(10.42^n)$, stood
unchanged for more than 10 years.

(iii) Using a recent upper bound of $30^n$ for the number of
triangulations, due to Sharir and Sheffer, we show that $n$ points in the plane
in general position admit at most $O(68.62^n)$ non-crossing spanning cycles.

(iv) We derive lower bounds for the number of maximum and minimum
{\em weighted} geometric graphs (matchings, spanning trees, and tours).
We show that the number of shortest tours can be exponential in $n$ for
points in general position. These tours are automatically non-crossing.
Likewise, we show that the number of longest non-crossing
tours can be exponential in $n$. It was known that the number of shortest
non-crossing perfect matchings can be exponential in $n$, and here we show
that the number of longest non-crossing perfect matchings can be also
exponential in $n$. It was known that the number of longest non-crossing
spanning trees of a point set can be exponentially large, and here we
show that this can be also realized with points in convex position.
For points in convex position we obtain tight bounds for the
number of longest and shortest tours.
We also give a combinatorial characterization of longest tours,
which yields an $O(n\log n)$ time algorithm for computing them.
\end{abstract}

\medskip
\hspace{0.1in}
\textbf{\small Keywords:}
{\small Geometric graph, non-crossing property, Hamiltonian cycle, perfect
matching, triangulation, spanning tree.}

\section{Introduction}

Let $P$ be a set of $n$ points in the plane in {\em general position},
\ie no three points lie on a common line.
A {\em geometric graph} $G=(P,E)$ is a graph drawn in the plane so
that the vertex set consists of the points in $P$ and the edges
are drawn as straight line segments between the corresponding points in $P$.
All graphs we consider in this paper are geometric graphs. We call a
 graph \emph{non-crossing} if its edges intersect only at common endpoints.

Determining the maximum number of non-crossing geometric graphs on $n$
points in the plane is a fundamental question in combinatorial geometry.
We follow common conventions (see for instance~\cite{SW06})
and denote by $\pg(P)$ the number of non-crossing
geometric graphs that can be embedded over the planar point set $P$, and by $\pg(n)=\max_{|P|=n} \pg(P)$ the {\em maximum number}
of non-crossing graphs an $n$-element point set can admit.
Analogously, we introduce shorthand notation for the maximum number of
triangulations, perfect matchings, spanning trees, and spanning cycles
(\ie Hamiltonian cycles); see Table~\ref{table1}.
For example, $\tr(n)= O(f(n))$ means that any $n$-element point set
can have at most $O(f(n))$ triangulations, and $\tr(n)=\Omega(g(n))$
means that there exists some $n$-element point set that admits $\Omega(g(n))$
triangulations.

\begin{table}[h]
\begin{center}
\begin{tabular}{|c|c|c|c|}
\hline
{\textbf {Abbr.}} & {\textbf {Graph class}} & {\textbf {Lower bound}}
& {\textbf {Upper bound}} \\
\hline\hline
\pg(n) & graphs &
$\Omega(41.18^n)$~\cite{AHV+06,GNT00} & $O(207.84^n)$~\cite{HSSTW11,SS10}\\
\hline
\cf(n) & cycle-free graphs & $\mathbf{\Omega(12.26^n)}$~[new, Theorem~\ref{thm:lbst}] & $O(164.49^n)$~\cite{HSSTW11,SS10}\\
\hline
\pm(n) & perfect matchings & $\Omega^*(3^n)$~\cite{GNT00} & $O(10.07^n)$~\cite{SW06} \\
\hline
\st(n) & spanning trees & $\mathbf{\Omega(12.00^n)}$~[new, Theorem~\ref{thm:lbst}] & $O(146.37^n)$~\cite{HSSTW11,SS10} \\
\hline	
\sc(n) & spanning cycles & $\Omega(4.64^n)$~\cite{GNT00} & $\mathbf{O(68.62^n)}$~[new, Theorem~\ref{th:triVSsc}]\\
\hline
\tr(n) & triangulations & $\mathbf{\Omega(8.65^n)}$ [new, Theorem~\ref{T1}]& $O(30^n)$~\cite{SS10} \\
\hline
\end{tabular}
\caption{Classes of non-crossing geometric (straight line) graphs, current best
  upper and lower bounds.\label{table1}}
\end{center}
\vspace{-\baselineskip}
\end{table}

In the past 30 years numerous researchers have tried to estimate these
quantities. In a pivotal result, Ajtai~\etal\cite{ACNS82} showed
that $\pg(n)=O(c^n)$ for an absolute, but very large constant $c>0$.
The constant $c$ has been improved several times since then. The best
bound today is $c<207.84$, which follows from a combination of a
result of Sharir and Sheffer~\cite{SS10} with a result of
Hoffmann~\etal\cite{HSSTW11}. Interestingly, this upper bound, as well as
the currently best upper bounds for $\st(n)$, $\sc(n)$, and $\cf(n)$,
are derived from an upper bound on the maximum number of triangulations, $\tr(n)$.
This underlines the importance of the bound for $\tr(n)$ in this setting.
For example, the best known upper bound for $\st(n)$ is the combination of
$\tr(n)\leq 30^n$~\cite{SS10} with the ratio
$\sc(n)/\tr(n)=O^*\left(4.879^n\right)$~\cite{HSSTW11};
see also previous work~\cite{RSW08,SS03,SW06,SW06b}.
To our knowledge, the only upper bound derived via a different
approach is that for the number of perfect matchings by
Sharir and Welzl~\cite{SW06}, $\pm(n)=O(10.07^n)$.

So far, we recalled various upper bounds on the maximum number of
geometric graphs in certain classes. In this paper we mostly conduct
our offensive from the other direction, on improving the corresponding
\emph{lower bounds}. Lower bounds for unweighted non-crossing graph
classes were obtained in~\cite{AHV+06,D99,GNT00}.
Garc\'{\i}a, Noy, and Tejel~\cite{GNT00} were the first to recognize
the power of the {\em double chain} configuration in establishing
good lower bounds for the number of matchings, triangulations, spanning cycles and
trees. It was widely believed for some time that the double chain
gives asymptotically the highest number of triangulations, namely $\Theta^*(8^n)$.
This was until 2006, when Aichholzer~\etal\cite{AHV+06} showed that another
configuration, the so-called {\em double zig-zag chain},
admits $\Theta^*(\sqrt{72}^n)=\Omega(8.48^n)$
triangulations\footnote{The $\Theta^*,O^*, \Omega^*$ notation is used to
describe the asymptotic growth of functions ignoring polynomial
factors.}.
The double zig-zag chain consists of two flat copies of a zig-zag chain.
A zig-zag chain is the simplest example of an {\em almost convex}
polygon. Such polygons have been introduced and first studied by
Hurtado and Noy~\cite{HN97}. In this paper we further exploit
the power of {\em almost convex} polygons and establish a new lower bound
$\tr(n)=\Omega(8.65^n)$. For matchings, spanning cycles, and plane graphs
the double chain still holds the current record.

Less studied are multiplicities of {\em weighted} geometric graphs.
The weight of a geometric graph is the sum of its (Euclidean) edge lengths.
This leads to the question: how many graphs of a certain type
(\eg matchings, spanning trees, or tours)
with {\em minimum} or {\em maximum} weight can be realized on an $n$-element
point set. The notation is analogous; see Table~2. 
Dumitrescu~\cite{D02} showed that the longest and shortest
matchings can have exponential multiplicity, $2^{\Omega(n)}$, for a point
set in general position. Furthermore, the longest and shortest spanning trees
can also have multiplicity of $2^{\Omega(n)}$. Both bounds count
explicitly geometric graphs with crossings; however these minima are
automatically non-crossing. The question for the maximum multiplicity
for non-crossing geometric graphs remained open for most classes
of geometric graphs. Since we lack any upper bounds that are
better than those for the corresponding unweighted classes,
the ``Upper bound'' column is missing from Table~\ref{table2}.

\begin{table}[h]
\begin{center}
\begin{tabular}{|c|c|c|}
\hline
{\textbf {Abbr.}} & {\textbf {Graph class}} & {\textbf {Lower bound}}\\
\hline\hline
$\pm_{\rm min}(n)$ &shortest perfect matchings & $\Omega(2^{n/4})$~\cite{D02} \\
\hline
$\pm_{\rm max}(n)$ & longest perfect matchings & $\mathbf{\Omega(2^{n/4})}$~[new, Theorem~\ref{T2}] \\
\hline
$\st_{\rm min}(n)$ & shortest spanning trees & $\Omega(2^{n/2})$~\cite{D02} \\
\hline	
$\st_{\rm max}(n)$ & longest spanning trees & $\mathbf{\Omega(2^n)}$~[new, Theorem~\ref{T3}] \\
\hline	
$\sc_{\rm min}(n)$ & shortest spanning cycles & $\mathbf{\Omega(2^{n/3})}$~[new, Theorem~\ref{thm:min-tours}]\\
\hline
$\sc_{\rm max}(n)$ & longest spanning cycles & $\mathbf{\Omega(2^{n/3})}$~[new, Theorem~\ref{thm:max-tours}] \\
\hline
\end{tabular}
\caption{Classes of {\em weighted} non-crossing geometric
  graphs: exponential lower bounds.\label{table2}}
\end{center}
\vspace{-2\baselineskip}
\end{table}


\paragraph{Our results.} 

\squishlist
\item[(I)] A new lower bound, $\Omega(8.65^n)$, for the maximum number of
triangulations a set of $n$ points can have.
We first re-derive the bound given by Aichholzer~\etal~\cite{AHV+06}
with a simpler analysis, which allows us to extend it to
more complex point sets. Our estimate might be the best possible
for the type of construction we consider.

\item[(II)] A new lower bound, $\Omega(12.00^n)$, for the maximum number of
non-crossing spanning trees a set of $n$ points can have.
This is obtained by refining the analysis of the number of such trees
on the ``double chain'' point configuration. The previous bound was
$\Omega(10.42^n)$. Our analysis of the construction improves
also the lower bound for cycle-free non-crossing graphs
due to Aichholzer~\etal\cite{AHV+06}, from $\Omega(11{.}62^n)$ to $\Omega(12{.}23^n)$.

\item[(III)] A new upper bound, $O(68.62^n)$, for the number of
non-crossing spanning cycles on $n$ points in the plane. This
improves the latest upper bound of $O(70.22^n)$ that follows from a
combination of the results of Buchin~\etal\cite{BKK+07} and a
recent upper bound $\tr(n)\leq 30^n$ by Sharir and Sheffer~\cite{SS10}.

\item[(IV)] New bounds on the maximum multiplicity of various weighted
geometric graphs on $n$ points (weighted by Euclidean length).
We show that the maximum number of longest non-crossing perfect matchings,
spanning trees, spanning cycles, as well as shortest tours are all exponential in $n$.
We also derive tight bounds, as well as a combinatorial characterization of
longest tours (with crossings allowed) over $n$ points in convex position.
This yields an $O(n \log n)$ algorithm to compute a longest tour for such sets.
\squishend

The main results of this paper were presented at the \emph{28th
  International Symposium on Theoretical Aspects of Computer Science
  (STACS)} in March 2011, in Dortmund, Germany.
An extended abstract (without all the proofs) appeared in the corresponding
conference proceedings~\cite{DSST11}. Since then we were able to further
refine the analysis in two of our main theorems.
In particular, we improved the bounds in Theorems~\ref{thm:lbst}
and~\ref{thm:polygonize}.

\subsection{Preliminaries} \label{sec:prelim}

\paragraph{Asymptotics of multinomial coefficients.}
Denote by $H(q)=-q \log q - (1-q) \log (1-q)$ the {\em binary entropy
function}, where $\log$ stands for the logarithm in base 2.
(By convention, $0 \log{0} =0$.)
For a constant $0 \leq \alpha \leq 1$, the following estimate can be
easily derived from Stirling's formula for the factorial:
\begin{equation} \label{E1}
{n \choose {\alpha n}}= \Theta(n^{-{1/2}} 2^{H(\alpha)n}),
\end{equation}
We also need the following bound on the sum of binomial coefficients;
see~\cite{BST98} for a proof and~\cite{DK01,DS00} for an application.
If $0<\alpha \leq {1 \over 2}$ is a constant,
\begin{equation} \label{E21}
\sum_{k=0} ^{k \leq \alpha n} {n \choose k} \leq 2^{H(\alpha)n}.
\end{equation}

Define similarly the {\em generalized entropy function} of $k$
parameters $\alpha_1,\ldots,\alpha_k$, satisfying
\begin{equation} \label{E2}
\sum_{i=1}^k \alpha_i =1, \ \ \alpha_1,\ldots,\alpha_k \geq 0, \
\end{equation}

\begin{equation*} \label{E3}
H_k(\alpha_1,\ldots,\alpha_k)=-\sum_{i=1}^k \alpha_i \log{\alpha_i}.
\end{equation*}

Clearly, $H(q)=H_2(q,1-q)$. Recall, the multinomial coefficient
\begin{equation*} \label{E4}
{n \choose n_1,n_2,\ldots,n_k} = \frac{n!}{n_1! n_2! \ldots n_k!},
\end{equation*}
where $\sum_{i=1}^k n_i =n$, counts the number of distinct ways to
permute a multiset of $n$ elements, $k$ of which are distinct,
with $n_i$, $i=1,\ldots,k$,
being the multiplicities of each of the $k$ distinct elements.

Assuming that $n_i=\alpha_i n$, $i=1,\ldots,k$, for
constants $\alpha_1,\ldots,\alpha_k$, satisfying~\eqref{E2},
again by using Stirling's formula for the factorial, one gets
an expression analogous to~\eqref{E1}:
\begin{equation} \label{E5}
{n \choose n_1,n_2,\ldots,n_k} =
\Theta(n^{-(k-1)/2}) \cdot \left(\prod_{i=1}^k \alpha_i^{-\alpha_i}\right)^n=
\Theta(n^{-(k-1)/2}) \cdot 2^{H_k(\alpha_1,\ldots,\alpha_k)n}.
\end{equation}

\paragraph{Notations and conventions.}
For a polygonal chain $P$, let $|P|$ denote the number of vertices.
If $1<c_1<c_2$ are two constants, we frequently write
$\Omega^*(c_2^n) = \Omega(c_1^n)$.
A geometric graph $G=(V,E)$ is called a {\em (geometric) thrackle}, if any two
edges in $E$ either cross or share a common endpoint; see \eg\cite{PA95}.

\section {Lower bound on the maximum number of triangulations}
\label{sec:tri}

Hurtado and Noy~\cite{HN97} introduced the class $P(n,k^r)$ of
{\em almost convex polygons}. A simple polygon is in this class
if it has $n$ vertices, and it can be obtained from a convex $r$-gon
by replacing each edge with a ``flat'' reflex chain having $k$
interior vertices (thus there are $r$ reflex chains altogether), such
that any two vertices of the polygon can be connected by an internal
diagonal unless they belong to the same reflex chain.
For example, $P(n,0^r)$ is the class of convex polygons with $n=r$ vertices.
Note that, for $r \geq 3$ and $k \geq 0$, every polygon in $P(n,k^r)$ has
$n=r(k+1)$ vertices, $r$ of which are convex.

Following the notation from~\cite{HN97}, we denote by $t(n,k^r)$
the number of triangulations of a polygon $P(n,k^r)$. According to
\cite[Theorem 3]{HN97},
\begin{equation*} \label{E6}
t(n,k^r) =\Theta\left(\left(\frac{1+k/2}{2^k}\right)^r \cdot t(n,0^n)\right) =
\Theta\left(\left( \frac{k+2}{2^{k+1}}\right)^r \cdot 4^{r(k+1)}\right)=
\Theta\left(\left( (k+2)^{\frac{1}{k+1}} \cdot 2 \right)^n\right).
\end{equation*}
In particular,
\begin{itemize}
\item for $k=1$, $t(n,1^r) = \Theta( (2 \sqrt{3})^n)=\Theta(\sqrt{12}^n)$.
This estimate was used by Aichholzer~\etal~\cite{AHV+06} to show that
the double zig-zag chain has $\Omega(8.48^n)$ triangulations. 
\item for $k=2$, $t(n,2^r) = \Theta((2^{5/3})^n)$.
\item for $k=3$, $t(n,3^r) = \Theta((5^{1/4} \cdot 2)^n)$.
\item for $k=4$, $t(n,4^r) = \Theta((6^{1/5} \cdot 2)^n)$.
\end{itemize}

Analogously we define the class $P^+(n,k^r)$ of {\em almost convex (polygonal) chains}.
An $x$-monotone polygonal chain in this class has $n=r(k+1)+1$
vertices, and it can be obtained from an $x$-monotone convex chain
with $r+1$ vertices by replacing each edge with a ``flat'' reflex
chain with $k$ interior vertices, such that any two vertices of the
chain can be connected by a segment lying above the chain unless they are incident
to the same reflex chain. See Fig.~\ref{f6} for a small example.
To further simplify notation,
we denote by $P^+(n,k^r)$ any polygonal chain in this class; note they are all equivalent
in the sense that they have the same visibility graph. Note that the simple polygon
bounded by an almost convex chain $P^+(n,k^r)$ and the segment between its two extremal
vertices has $\Theta^*(t(n,k^r))$ triangulations.
\begin{figure}[htbp]
\centerline{\epsfysize=1.4in \epsffile{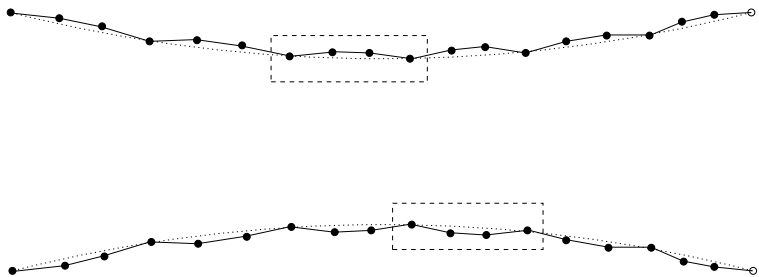}}
\caption{Two (flat) mutually visible copies of $P^+(19,2^6)$
with opposite orientations that form $D(38,2^{12})$. Two consecutive
hull vertices of $P^+(19,2^6)$ with a reflex chain of two vertices in
between are indicated in both the upper and the lower chain.}
\label{f6}
\end{figure}

In establishing our new lower bound for the maximum number of triangulations,
we go through the following steps: We first describe the double zig-zag chain
from~\cite{AHV+06} in our framework, and re-derive the $\Omega^*(\sqrt{72}^n)$
bound of~\cite{AHV+06} for the number of its triangulations. Our simpler analysis
extends to some variants of the double zig-zag chain, and leads to a new
lower bound on $\tr(n)=\Omega(8.65^n)$.

Two $x$-monotone polygonal chains $L$ and $U$ are said to be {\em mutually
visible} if every pair of points, $p \in L$ and $q \in U$, are {\em visible}
from each other (that is, the segment $pq$ crosses neither $U$ nor $L$).

Let us call $D(n,k^r)$ the {\em generalized double chain} of $n$ points formed by
the set of vertices in two mutually visible $x$-monotone chains, each
with $n/2=r(k+1)+1$ vertices, where the upper chain is a $P^+(n,k^r)$ and
the lower chain is a horizontally reflected copy of $P^+(n,k^r)$,
as in Fig.~\ref{f6}. Generalized double chains are a family of point configurations,
containing, among others, the double chain and double zig-zag chain configurations.
In particular, $D(n,1^r)$ is the \emph{double zig-zag chain} used
by Aichholzer~\etal\cite{AHV+06}.

\begin{theorem} \label{T1}
The point set $D(n,3^r)$ with $n=8r+2$ points admits $\Omega(8.65^n)$ triangulations.
Consequently, $\tr(n)=\Omega(8.65^n)$.
\end{theorem}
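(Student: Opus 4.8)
The plan is to count the triangulations of the generalized double chain $D(n,3^r)$ by decomposing every triangulation into three parts: the triangulations of the region below the lower chain $L$, the triangulations of the region above the upper chain $U$, and the triangulations of the ``middle'' region $M$ bounded between the two chains. Since each chain is a copy of an almost convex chain $P^+(n,3^r)$, the bounded polygon it cuts off has $\Theta^*(t(n,3^r))$ triangulations by the result quoted from Hurtado and Noy, namely $t(n,3^r)=\Theta((5^{1/4}\cdot 2)^n)$. The two outer regions together therefore contribute a factor of $\Theta^*\big((5^{1/4}\cdot 2)^{n}\big)$, where $n$ here is measured over the roughly $n/2$ vertices of each chain. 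The core of the argument is then to count the triangulations of the middle region and multiply the three contributions.

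First I would set up the combinatorics of the middle region $M$, which is the ``lens'' between the two mutually visible chains. The standard double-chain phenomenon is that a triangulation of $M$ is determined by choosing, for each vertex, how the diagonals zig-zag across the gap between $U$ and $L$; because the two chains are mutually visible, every vertex of one chain sees every vertex of the other, and the triangulations of the lens correspond to lattice paths / interleavings of the two vertex sequences. This should give a central-binomial-type count $\binom{n}{n/2}=\Theta^*(2^n)$ (or a controlled variant thereof) for the middle, via the multinomial asymptotics in equation~\eqref{E5} together with the entropy estimate. I would make this precise by describing how each triangulation of $M$ is encoded by a monotone staircase and counting the staircases, being careful to only count genuinely distinct triangulations (avoiding double-counting the diagonals shared with the chain triangulations).

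Next I would combine the bounds. Writing the total as roughly (triangulations of $M$) $\times$ (triangulations of upper cap) $\times$ (triangulations of lower cap), and substituting $t(n,3^r)=\Theta((5^{1/4}\cdot 2)^n)$ for each cap over $\approx n/2$ vertices together with the middle factor $\Theta^*(2^n)$, I would collect the per-point growth rate into a single base. The claimed $\Omega(8.65^n)$ should emerge as a product of the form $2 \cdot (5^{1/4}) \cdot (\text{middle contribution})$ raised to appropriate fractional powers reflecting that each chain holds half the points; the value $k=3$ is presumably singled out because it optimizes this product over the integer parameter $k$ (one would check $k=2$ and $k=4$ give smaller bases, consistent with the list of $t(n,k^r)$ values). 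Here the re-derivation of the $D(n,1^r)$ bound of $\Omega^*(\sqrt{72}^n)$ serves as a sanity check: plugging $k=1$, i.e. $t(n,1^r)=\Theta(\sqrt{12}^n)$, into the same formula must reproduce $\sqrt{72}^n$.

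The main obstacle I expect is the accurate counting of the middle region and, relatedly, ensuring the three regional counts multiply cleanly without overcounting or undercounting the boundary diagonals along the two chains. One must verify that almost-convexity (the parameter $k=3$) does not destroy the mutual-visibility structure that makes the lens count clean, and that the ``flat'' reflex chains can be placed so that the polygon bounded by $P^+(n,3^r)$ and its chord still achieves the full $\Theta^*(t(n,3^r))$ count while the two chains remain mutually visible. Finally, extracting the numerical constant $8.65$ requires evaluating the product $\big(2\cdot 5^{1/4}\big)$-type expression against the middle factor and optimizing over $k$; this is a short but delicate calculation where the fractional exponents coming from the $(k+1)$ points per reflex unit must be tracked carefully.
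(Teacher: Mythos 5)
Your scaffolding---two caps counted by Hurtado--Noy plus the middle lens counted by the central-binomial formula $\binom{2m-2}{m-1}=\Theta^*(4^m)$---is the right starting point, and it is also how the paper begins; but multiplying these three \emph{fixed} counts cannot reach $8.65^n$, and this is a genuine gap, not a detail. Run the numbers: each chain of $D(n,3^r)$ has $n/2=4r+1$ vertices, so each cap contributes $t(4r+1,3^r)=\Theta^*\bigl(5^{r}2^{4r}\bigr)$ and the lens contributes $\Theta^*\bigl(4^{4r}\bigr)=\Theta^*\bigl(2^{8r}\bigr)$; the product is $\Theta^*\bigl(5^{2r}2^{16r}\bigr)$, whose per-point base (with $n=8r+2$) is $\bigl(5^{2}\cdot 2^{16}\bigr)^{1/8}=4\cdot 5^{1/4}\approx 5.98$. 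So your plan proves only $\Omega^*(5.98^n)$, weaker even than the classic double chain's $\Theta^*(8^n)$. Your own proposed sanity check exposes this rather than confirming it: for $k=1$ the same product gives $\bigl(12^{2}\cdot 16\bigr)^{1/4}=\sqrt{48}\approx 6.93$, not $\sqrt{72}\approx 8.49$. More generally, under your scheme the base is $4(k+2)^{1/(k+1)}$, which is \emph{maximized at $k=0$} and decreases in $k$, so nothing in your argument would ever single out $k=3$ as optimal.

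The missing idea---the heart of the paper's proof---is that the boundary between cap and lens is not fixed but is itself a source of entropy. Keeping all chain edges present, the paper considers, for each reflex chain of $U$ (which, seen from the middle, forms a convex $5$-chain together with its two flanking hull vertices), a choice of ``length reduction'' of type $i\in\{0,1,2,3\}$: add edges on the middle side of the chain that cut $i$ of the reflex vertices off from the lens. A type-$i$ reduction can be implemented in $a_i$ ways, where $a_0=1$, $a_1=3$, $a_2=a_3=5$. Applying type $i$ to an $\alpha_i$-fraction of the $r$ reflex chains contributes, per chain, a multinomial factor $\Theta^*\bigl(2^{H_4(\alpha_0,\alpha_1,\alpha_2,\alpha_3)r}\bigr)$ times $\prod_i a_i^{\alpha_i r}$, at the price of shrinking the lens to $(4-\alpha_1-2\alpha_2-3\alpha_3)r$ vertices per side. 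Since reduction edges join two vertices of the same chain while every lens diagonal joins the two chains, distinct triples (cap triangulations, reduction choices, lens triangulation) yield distinct triangulations, so these factors legitimately multiply. Optimizing numerically ($\alpha_0=0.23$, $\alpha_1=0.34$, $\alpha_2=0.29$, $\alpha_3=0.14$) lifts the base from $5.98$ to $8.6504\ldots$, giving the theorem; it is exactly this reduction entropy, absent from your proposal, that makes almost-convex chains beat the plain double chain at all.
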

\begin{proof}
The following estimate is used in all our triangulation bounds.
Consider two mutually visible flat polygonal chains, $L$ and $U$, with $m$
vertices each ($L$ is the lower chain and $U$ is the upper chain). As
in the proof of~\cite[Theorem 4.1]{GNT00}, the region between the two
chains consists of $2m-2$ triangles, such that exactly $m-1$ triangles
have an edge along $L$ and the remaining $m-1$ triangles have an edge adjacent to $U$.
It follows that the number of distinct triangulations of this middle region is
\begin{equation} \label{E7}
{2m-2 \choose m-1} = \Theta(m^{-1/2} \cdot 4^m).
\end{equation}

\paragraph{The old $\Omega(8.48^n)$ lower bound in a new perspective.}
We estimate from below the number of triangulations of $D(n,1^r)$ as
follows. Recall that $|L|=|U|=n/2=2r+1$. Include all edges of $L$ and $U$
in any of the triangulations we construct.
Now construct different triangulations as follows. Independently
select a subset of $\alpha_1 r$ short edges of $\conv(U)$ and
similarly, a subset of $\alpha_1 r$ short edges of $\conv(L)$. Here
$\alpha_1 \in (0,1)$ is a constant to be chosen later.
According to~\eqref{E1}, this can be done in
$$ {r \choose \alpha_1 r} = \Theta(r^{-1/2} \cdot 2^{H(\alpha_1) r}) $$
ways in each of the two chains. Include these edges in the triangulation.
Observe that after adding these short edges the middle region between
the (initial) chains $L$ and $U$ is sandwiched between two
mutually visible shorter chains, say $L' \subset L$ and $U' \subset U$, where
\begin{equation} \label{E8}
|L'|=|U'|=2r-\alpha_1 r=(2-\alpha_1)r.
\end{equation}
Triangulate this middle regions in all of the possible ways,
as outlined in the paragraph above~\eqref{E7}. Let $N$
denote the total number of triangulations of $D(n,1^r)$ obtained in this way.
By the above estimate, we have $t(n,1^r) =\Theta((2 \sqrt{3})^n)$.
Combining this with~\eqref{E7} and~\eqref{E8},
\begin{align*} \label{E9}
N &= \Omega^* \left( \left[(2 \sqrt{3})^{2r} 2^{H(\alpha_1) r} \right]^2
4^{(2-\alpha_1)r} \right)=
\Omega^* \left(
\left[ 2^{2r} 3^r 2^{(2-\alpha_1)r} 2^{H(\alpha_1) r} \right]^2
\right) =
\nonumber \\
&= \Omega^* \left(
\left[ 2^{2} \cdot 3 \cdot 2^{(2-\alpha_1)} 2^{H(\alpha_1)} \right]^{2r}
\right) =
\Omega^* \left(
\left[ 2^{4-\alpha_1+H(\alpha_1)}\cdot 3 \right]^{n/2} \right) =
\Omega^* \left( a^n\right),
\end{align*}
where
$$ a=\left[ 2^{4-\alpha_1+H(\alpha_1)}\cdot 3 \right]^{(1/2)}.$$
By setting $\alpha_1=1/3$, as in~\cite{AHV+06}, this yields
$a= 6 \sqrt{2}= 8.485\ldots$, and $N = \Omega^*(8.485^n)=\Omega(8.48^n)$.

Applying a similar analysis for a generalized double chain with reflex chains of length
$3$ implies Theorem \ref{T1}.
To simplify the presentation we describe the next step
using almost convex polygonal chains with reflex chains of length $2$.
\paragraph{The next step: using $D(n,2^r)$.}
Notice that in this case $n=6r+2$.
In the upper chain $U$, each reflex chain of two points together with the two
hull vertices next to it form a convex $4$-chain, as viewed from below.
In each such $4$-chain, independently, we proceed with one of the
following three choices: (0) we leave it unchanged; (1) we add
one edge, so that the chain length is reduced by $1$ (from $4$ to
$3$); (2) we add two edges, so that the chain length is reduced by $2$
(from $4$ to $2$). We refer to these changes and corresponding length
reductions as reductions of type 0, 1 and 2, respectively.
For $i=0,1,2$, let $a_i$ count the number of distinct ways such
reductions can be performed on a convex $4$-chain of $U$.
See Fig.~\ref{f7}(left). Clearly, we have $a_0=1$ and $a_1=a_2=2$.
\begin{figure}[h]
\centerline{\epsfysize=1.22in \epsffile{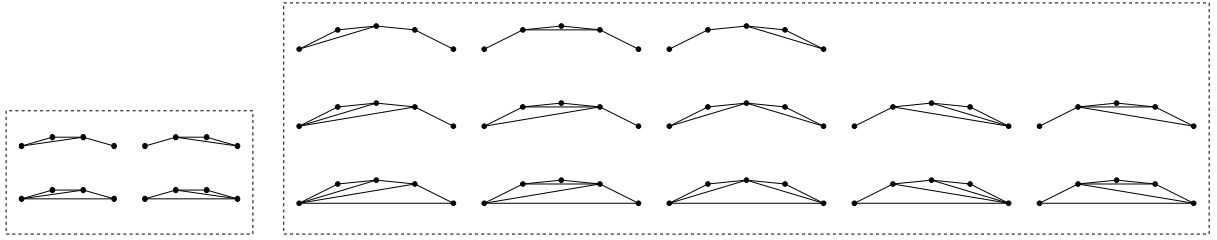}}
\caption{Left: Length reductions for the chains of
$D(n,2^r)$: two of type $1$ and two of type $2$.
Right: Length reductions for $D(n,3^r)$: three of type
$1$, five of type $2$, and five of type $3$.}
\label{f7}
\end{figure}

For $i=0,1,2$, make $\alpha_i r$ reductions of type $i$ in $U$,
for some suitable (constant) values $\alpha_i$ to be determined, with
\begin{equation*} \label{E10}
\sum_{i=1}^3 \alpha_i =1, \ \ \alpha_0,\alpha_1,\alpha_2 \geq 0,
\end{equation*}

According to~\eqref{E5}, choosing one of the three reduction types for
each of the $4$-chains in $U$ can be done in
$$ \Theta^*\left(2^{H_3(\alpha_0,\alpha_1,\alpha_2)r}\right) $$
distinct ways. Once a reduction type $i$ has been chosen for a
specific $4$-chain, it can be implemented in $a_i$ distinct ways.
It follows that the number of distinct resulting chains $U'$ is
\begin{equation*} \label{E11}
\Theta^*\left(2^{H_3(\alpha_0,\alpha_1,\alpha_2)r}\right)
\prod_{i=1}^2 a_i^{\alpha_i r} =
\Theta^*\left( 2^{H_3(\alpha_0,\alpha_1,\alpha_2)r} \cdot
2^{\alpha_1 r} \cdot 2^{\alpha_2 r} \right).
\end{equation*}

So there are that many distinct chains $U'$ which will occur.
Proceed similarly for the lower chain $L$. Observe that once the $\alpha_i$
are fixed, the two resulting sub-chains $U' \subset U$ and $L' \subset L$
have the same length
\begin{equation} \label{E12}
|L'|=|U'|=3r-\alpha_1 r - 2\alpha_2 r =(3-\alpha_1 - 2\alpha_2)r.
\end{equation}
Triangulate the middle part (between $U'$ and $L'$) in any of the
possible ways, according to~\eqref{E7}. Let $N$ denote the total
number of triangulations obtained in this way.
Recall the estimate $t(n,2^r) =\Theta((2^{5/3})^n)$, which yields
$t(3r,2^r) =\Theta((2^{5/3})^{3r})=\Theta(2^{5r})$.
By~\eqref{E7} and~\eqref{E12}, we obtain
\begin{align*} \label{E13}
N &= \Omega^* \left( \left[ (2^{5r})^2 \cdot
2^{2H_3(\alpha_0,\alpha_1,\alpha_2)r} \cdot 2^{2\alpha_1 r} \cdot 2^{2\alpha_2 r}
\cdot 4^{(3-\alpha_1 - 2\alpha_2)r} \right] \right) \nonumber \\
&= \Omega^* \left(
\left[ 2^{5} \cdot 2^{H_3(\alpha_0,\alpha_1,\alpha_2)}
\cdot 2^{\alpha_1} \cdot 2^{\alpha_2} \cdot 2^{(3-\alpha_1- 2\alpha_2)}
\right]^{2r} \right) =
\Omega^* \left( a^n \right),
\end{align*}
where
$$ a=\left[ 2^{8-\alpha_2+H_3(\alpha_0,\alpha_1,\alpha_2)} \right]^{1/3}. $$
The values\footnote{These values have been obtained by numerical
experiments.} $\alpha_0=\alpha_1=0.4$ and $\alpha_2=0.2$ yield
$a= 8.617\ldots$, and $N = \Omega^*(8.617^n)=\Omega(8.61^n)$.

\paragraph{The new $\Omega(8.65^n)$ lower bound: using $D(n,3^r)$.}
We only briefly outline the differences from the previous calculation.
We have $n=|D(n,3^r)|=8r+2$.
For $i=0,1,2,3$, let $a_i$ count the number of distinct ways
reductions can be performed on a convex $5$-chain of $U$.
As illustrated in Fig.~\ref{f7}(right), we have $a_0=1$, $a_0=3$,
and $a_2=a_3=5$. The analogue of~\eqref{E12} is
\begin{equation*} \label{E14}
|L'|=|U'|=4r-\alpha_1 r - 2\alpha_2 r - 3\alpha_3 r =
(4-\alpha_1 - 2\alpha_2 - 3\alpha_3)r.
\end{equation*}
The estimate $t(n,3^r) =\Theta((5^{1/4} \cdot 2)^n)$
now yields $t(4r,3^r) =\Theta((5^{1/4} \cdot 2)^{4r})=\Theta(5^{r} \cdot 2^{4r})$.
The resulting lower bound is
\begin{align*} \label{E15}
N &= \Omega^* \left(
\left[ 5 \cdot 2^{4} \cdot 2^{H_4(\alpha_0,\alpha_1,\alpha_2,\alpha_3)}
\cdot 3^{\alpha_1} \cdot 5^{\alpha_2} \cdot 5^{\alpha_3}
\cdot 2^{(4-\alpha_1- 2\alpha_2 - 3\alpha_3)}
\right]^{2r} \right) \nonumber \\
&= \Omega^* \left(
\left[ 5 \cdot 2^{8-\alpha_1- 2\alpha_2 - 3\alpha_3
+ H_4(\alpha_0,\alpha_1,\alpha_2,\alpha_3)}
\cdot 3^{\alpha_1} \cdot 5^{\alpha_2} \cdot 5^{\alpha_3}
\right]^{n/4} \right) =
\Omega^* \left( a^n \right),
\end{align*}
where
$$ a=\left[5 \cdot 2^{8-\alpha_1- 2\alpha_2 - 3\alpha_3 +
H_4(\alpha_0,\alpha_1,\alpha_2,\alpha_3)}
\cdot 3^{\alpha_1} \cdot 5^{\alpha_2} \cdot 5^{\alpha_3}
\right]^{1/4}. $$
The values $\alpha_0=0.23$, $\alpha_1=0.34$, $\alpha_2=0.29$, $\alpha_3=0.14$,
yield $a= 8.6504\ldots$, and $N = \Omega^*(8.6504^n)=\Omega(8.65^n)$.
The proof of Theorem~\ref{T1} is now complete.
\end{proof}

\paragraph{Remark.}
The next step in this approach, using $D(n,4^r)$, where $n=10r+2$, seems
to bring no further improvement in the bound. Omitting the details, it
gives (with the obvious extended notation):
\begin{align*} 
N &\geq \Omega^* \left(
\left[ 6 \cdot 2^{10-\alpha_1- 2\alpha_2 - 3\alpha_3 - 4\alpha_4
+ H_5(\alpha_0,\alpha_1,\alpha_2,\alpha_3,\alpha_4)}
\cdot 4^{\alpha_1} \cdot 9^{\alpha_2} \cdot 14^{\alpha_3} \cdot 14^{\alpha_4}
\right]^{n/5} \right) =
\Omega^* \left( a^n \right),
\end{align*}
where
$$ a=\left[6 \cdot 2^{10-\alpha_1- 2\alpha_2 - 3\alpha_3 - 4\alpha_4 +
H_5(\alpha_0,\alpha_1,\alpha_2,\alpha_3,\alpha_4)}
\cdot 4^{\alpha_1} \cdot 9^{\alpha_2} \cdot 14^{\alpha_3} \cdot 14^{\alpha_4}
\right]^{1/5}. $$
Numerical experiments suggest that the maximum value of $a$
is smaller than $8.65$, with the corresponding $\alpha$-tuple near
$\alpha_0=0.127$, $\alpha_1=0.254$,
$\alpha_2=0.286$, $\alpha_3=0.222$, $\alpha_4=0.111$, which only yields
$a= 8.6485\ldots$.
Similarly, when taking reflex chains one unit longer
(using $D(n,4^r)$, where $n=12r+2$), numerical experiments suggest that
the maximum value of $a$ is smaller than $8.64$.

\section{Lower bound on the maximum number of
non-crossing spanning trees and forests} \label{sec:trees}

In this section we derive a new lower bound for the number of
non-crossing spanning trees on the double-chain $D(n,0^r)$,
hence also for the maximum number of non-crossing spanning trees
an $n$-element planar point set can have.
The previous best bound, $\Omega(10.42^n)$, is due to Dumitrescu~\cite{D02}.
By refining the analysis of~\cite{D02} we obtain a new bound $\Omega(12.00^n)$.
\begin{theorem}\label{thm:lbst}
For the double chain $D(n,0^r)$, we have
\begin{align*}
\Omega(12.00^n) < \st(D(n,0^r)) & < O(24.68^n), \text{ and} \\
\Omega(12.26^n) < \cf(D(n,0^r)) & < O(24.68^n).
\end{align*}
Consequently, $\st(n)= \Omega(12.00^n)$ and $\cf(n)=\Omega(12.26^n).$
\end{theorem}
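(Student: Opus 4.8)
The plan is to count non-crossing spanning trees (and forests) of the double chain $D(n,0^r)$ by decomposing each such tree into three independent pieces: the edges lying along/inside the upper convex chain $U$, the edges lying along/inside the lower convex chain $L$, and the ``middle'' edges connecting $U$ to $L$ across the central region. The double chain $D(n,0^r)$ consists of two convex chains of $r$ points each, mutually visible, so $n=2r$. The key structural fact I would exploit is that the convex position of each chain forces the edges drawn \emph{above} $U$ (resp.\ \emph{below} $L$) to form non-crossing graphs on a convex point set, while the edges in the middle region connect the two chains and must themselves be non-crossing; these three groups interact only through the degree/connectivity constraints needed to make the union a spanning tree.

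First I would set up the counting by choosing, on each chain, a non-crossing forest on the convex $r$-gon (using the chain's extreme visibility), and then counting the ways to connect the resulting components across the middle using a non-crossing set of ``transversal'' edges. The natural parameterization is to let a constant fraction of the middle be triangulated/spanned in a controlled way, analogous to the triangulation argument in Theorem~\ref{T1}: one selects how many connector edges go between $U$ and $L$, counts the non-crossing ways to place them via a central-binomial/Catalan-type estimate like~\eqref{E7}, and multiplies by the number of admissible non-crossing forests on each side. Each of these counts should again reduce, via~\eqref{E1} and~\eqref{E5}, to an entropy-optimization of the form $\Omega^*(a^n)$ where $a$ is the maximum over a few fractional parameters $\alpha_i$ governing how many vertices on each chain attach internally versus across the middle. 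I would then optimize numerically to extract $a>12.00$ for spanning trees and $a>12.26$ for cycle-free graphs (forests), the forest bound being larger because dropping the connectivity requirement frees up more independent choices.

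The main obstacle I expect is enforcing the global acyclicity-and-connectivity (for trees) or just acyclicity (for forests) constraints while keeping the three-part count independent enough to multiply cleanly. Naively multiplying the number of non-crossing structures on $U$, on $L$, and in the middle over-counts, because an edge inside $U$ together with middle connectors can create cycles, and because a spanning tree needs exactly one connected component. The careful part is to define a canonical decomposition so that every non-crossing spanning tree is counted exactly once (or within a polynomial factor, which is harmless for an $\Omega^*$ bound), and to verify that the chosen sub-forests on each side plus the chosen middle connectors always \emph{can} be completed to a spanning tree without introducing crossings. For the forest bound one must instead count non-crossing forests that are maximal cycle-free graphs, accounting correctly for the extra freedom; getting the entropy bookkeeping right — which $\alpha_i$-labeled vertex roles correspond to which local configurations, and how many local realizations $a_i$ each admits — is where the refinement over the previous $\Omega(10.42^n)$ analysis of~\cite{D02} lives.

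For the upper bound $O(24.68^n)$ I would argue that every non-crossing spanning tree (or forest) of $D(n,0^r)$ can be extended to a triangulation of the double chain, and conversely each triangulation supports only a bounded-base number of spanning trees as subgraphs; combining the triangulation count $t(n,0^n)=\Theta^*(4^n)$-type estimate for the double chain with the per-triangle branching factor yields a clean $O(c^n)$ ceiling with $c<24.68$. The final sentence, $\st(n)=\Omega(12.00^n)$ and $\cf(n)=\Omega(12.26^n)$, then follows immediately since $D(n,0^r)$ is a particular $n$-point set in general position, so its counts lower-bound the maxima $\st(n)$ and $\cf(n)$ by definition.
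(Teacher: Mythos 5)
Your lower-bound plan has the right skeleton (split each tree into a part on $U$, a part on $L$, and transversal ``middle'' edges, then run an entropy optimization over densities), and this is indeed the shape of the paper's argument; but the proposal is missing the one idea that actually produces the constants $12.00$ and $12.26$, namely the structure of the middle part. The paper does not count loose connector edges placed via an \eqref{E7}-type estimate; it counts forests consisting of exactly two trees (one per chain, costing only an $O(n^2)$ factor against spanning trees) whose transversal edges decompose into \emph{multi-vertex bridges}: an $(i,j)$-bridge is a subtree with $i$ vertices on $L$ and $j$ on $U$ ($1\le i,j\le z$), realizable in $B_{ij}=\binom{i+j-2}{i-1}$ combinatorial ways, oriented upward or downward, and attached to its tree at one designated vertex so that it acts as a super-node in a convex-position tree counted via the $\Theta^*\bigl((27/4)^n\bigr)$ bound of~\cite{FN99}. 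This bridge structure is precisely what resolves the cycle/over-counting obstacle you flag but leave open, and it is not a detail: with single-edge connectors (the case $z=1$) the identical optimization yields only $\Omega(10.42^n)$, i.e., exactly the old bound of~\cite{D02}. The values $12.00$ (attained at $z=4$) and $12.26$ come out of the numerical optimization over the $\alpha_{ij}$ for this richer class, so asserting that ``optimizing numerically'' will extract $a>12.00$ without specifying the class being optimized over is a genuine gap, not a routine verification.

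Your upper-bound argument also fails quantitatively. The double chain has $\Theta^*(8^n)$ triangulations --- that is the point of~\cite{GNT00} --- not a $\Theta^*(4^n)$-type count (the Catalan bound applies to one convex chain, not to $D(n,0^r)$). Consequently ``number of triangulations times spanning trees per triangulation'' could only reach $O(24.68^n)$ if every triangulation of the double chain had at most roughly $3.08^n$ spanning trees, which is unsupported: triangulations of $D(n,0^r)$ are near-maximal planar graphs, and the known bounds on spanning trees in planar graphs are around $5.3^n$, so your product lands above $40^n$. The paper's upper bound is structurally different: it splits each forest $F$ into $F_L\cup F_U$ (edges inside the chains), bounded by the $O(8.225^n)$ convex-position forest count of~\cite{FN99}, and the bridge part $F_B$, bounded by $\sum_{k}\binom{m}{k}^2 4^k < 3^n$ via a reduction to triangulations of the middle region as in \eqref{E7}; multiplying gives $O(n\cdot 3^n\cdot 8.225^n)=O(24.68^n)$. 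Your final remark --- that the lower bounds for $D(n,0^r)$ transfer to $\st(n)$ and $\cf(n)$ by definition of the maxima --- is correct and matches the paper.
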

\begin{proof}
Similarly to~\cite{D02}, instead of spanning trees, we count
(spanning) forests formed by two trees. One of the trees is
associated with the lower chain $L$ and is called \emph{lower tree},
while the other tree is associated with the upper chain $U$ and is
called \emph{upper tree}. Since the two trees can be connected in at
most $O(n^2)$ ways, it suffices to bound from below the number of forests of two
such trees.
\begin{figure}[htbp]
\centerline{\includegraphics[width=.6\columnwidth]{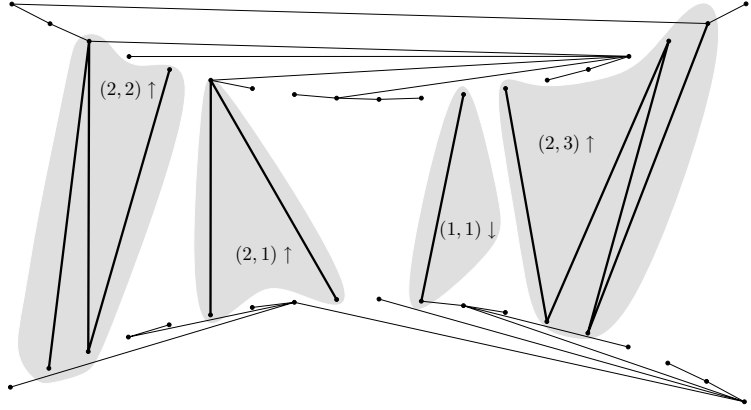}}
\caption{A double chain with lower and upper tree and four bridges.}
\label{fig:pairings}
\end{figure}

Fig.~\ref{fig:pairings} shows an example. We count only special kinds of forests:
no edge of the lower tree connects two vertices of the upper chain, and similarly,
no edge of the upper tree connects two vertices of the lower chain.
We call the connected components of the edges between $U$ and $L$
\emph{bridges}. For the class of forests we consider, bridges are
subtrees of the lower or the upper tree. A bridge is called an \emph{$(i,j)$-bridge}
if it has $i$ vertices in $L$ and $j$ vertices in $U$.
Every bridge is part of either the upper or the lower tree.
In the first case we say that the bridge is oriented {\em upwards} and in the latter
case that it is oriented {\em downwards}. Since edges cannot cross, the bridges have
a natural left-to-right order. Fig.~\ref{fig:pairings} shows four
bridges, where the first bridge is an upward oriented
$(2,2)$-bridge. We consider only bridges
$(i,j)$, with $1\leq i,j \leq z$, for some fixed positive integer $z$.
For $z=1$, our analysis coincides with the one in~\cite{D02}, and we first
re-derive the lower bound of $\Omega(10.42^n)$ given there.
Successive improvements are achieved by considering lager values of $z$.

Let $m=n/2$ be the number of points in each chain. The distribution of
bridges is specified by a set of parameters $\alpha_{ij}$, to be determined later,
where the number of $(i,j)$-bridges is $\alpha_{ij} m$. To simplify
further expressions we introduce the following wildcard notation:
\[ \alpha_{i*}=\sum_{k=1}^z \alpha_{ik}, \quad \alpha_{*j}=\sum_{k=1}^z
\alpha_{kj},\quad \text{and} \quad
\alpha_{**}=\sum_{k=1}^z \alpha_{*k}=\sum_{k=1}^z \alpha_{k*}.\]
A vertex is called a \emph{bridge vertex}, if it is part of
some bridge, and it is a \emph{tree vertex} otherwise.
We denote by $\alpha_L m$ the number of bridge vertices along $L$,
and by $\alpha_U m$
the number of bridge vertices along $U$, we have
\[ \alpha_L= \sum_{k=1}^z k\alpha_{k*}, \quad \text{and} \quad \alpha_U =
\sum_{k=1}^z k\alpha_{*k}.\]

To count the forests we proceed as follows. We first count the distributions of the
vertices that belong to bridges on the lower ($N_L$) and upper chain ($N_U$).
We then count the number of different ways in which the bridges can be realized ($N_\text{bridges}$)
and the number of ways in which the bridges can be connected to the two trees ($N_{\text{links}}$).
Finally, we estimate the number of the trees within the two chains
($N_{\text{trees}}$). All these numbers are parameterized by the
variables $\alpha_{ij}$.

Consider the feasible locations of bridge vertices at the
lower chain. We have ${m \choose \alpha_L m}$ ways to choose the bridge vertices
that are in $L$. Every bridge vertex belongs to some $(i,j)$-bridge. The vertices
of the bridges cannot interleave, thus we can describe the
configuration of bridges by a sequence of $(i,j)$ tuples that denotes
the appearance of the $\alpha_{**}m$ bridges from left to right on $L$.
There are ${ \alpha_{**} m \choose \alpha_{11}m , \alpha_{12}m ,
\dotsc , \alpha_{zz}m}$ such sequences. This give us a total of
\[N_L:={m \choose \alpha_L m} { \alpha_{**} m \choose \alpha_{11} m,
\alpha_{12}m, \dotsc ,
\alpha_{zz}m}=\Theta^*\left(2^{H(\alpha_L)m+
\alpha_{**}H_{\left(z^2\right)}(\alpha_{11}/\alpha_{**},\dotsc,\alpha_{zz}/\alpha_{**})m}\right)\]
such ``configurations'' of bridge vertices along $L$.

We now determine how many options we have to place the bridge
vertices on $U$. Since we have already specified the sequence of
the $(i,j)$-bridges at the lower chain, all we can do is to select the bridge
vertices in $U$. This gives
\[N_U:={m \choose \alpha_U m}=\Theta^*\left(2^{H(\alpha_U) m}\right)\]
possibilities for the configuration on $U$.

We now study in how many ways the bridges can be added to the two
trees. Since all bridges are subtrees, we can link one of the bridge
vertices with the lower or upper tree. From this perspective the whole
bridge acts like a super-node in one of the trees. The orientation
of each bridge determines the tree it belongs to: upwards
bridges to the upper tree, and downwards bridges to the lower tree.
For every pair $(i,j)$ we orient half of the $(i,j)$-bridges upwards
and half of them downwards. To glue the bridges to the trees we have
to specify a vertex that will be linked to one of the trees.
Depending on the orientation of the $(i,j)$-bridge, we have $i$
candidates for a downwards oriented bridge and $j$ candidates
for an upward oriented bridge. In total we have
\[N_{\text{links}}:=
\prod_{i,j} \binom{\alpha_{i,j}m}{\alpha_{i,j}m/2} \left( i^{\alpha_{ij} /2} j^{\alpha_{ij} /2} \right)^m
=\prod_{i,j} \Theta^*\left(2^{\alpha_{i,j}m}\right) \left( ij\right)^{\alpha_{ij}m/2}
=\Theta^*\left(2^{\alpha_{**}m}\right)\prod_{i,j} \left( ij\right)^{\alpha_{ij}m/2}
 \]
ways to link the bridges with the trees.

Until now we have specified which vertices belong to which type of
bridges, the orientation of the bridges, and the vertex where the
bridge will be linked to its tree. It remains to count the number of
ways to actually ``draw'' the bridges. Let us consider an
$(i,j)$-bridge. All of its edges have to go from $L$ to $U$ and the bridge
has to be a tree. The number of such trees equals the number of
triangulations of a polygon with point set $\{(k,0) \; | \; 0\leq k
\leq i\} \cup \{ (k,1)\; | \; 0\leq k \leq j\}$. By deleting the
edges along the horizontal lines $y=0$ and $y=1$, we define a bijection
between these triangulations and the combinatorial types of $(i,j)$-bridges.
The number of triangulations is now easy to express similarly to Equation~\eqref{E7}:
We have $i+j-2$ triangles, and each triangle is adjacent to a horizontal edge
along either $y=0$ or $y=1$, where exactly $i-1$ triangles are adjacent to line $y=0$.
In total we have $B_{ij}:={ i+j-2 \choose i-1 }$ different
triangulations and therefore
we can express the number of different bridges by
\[N_{\text{bridges}}=\Big( \prod_{ij} B_{ij}^{\alpha_{ij}} \Big)^m.\]

\begin{figure}[htbp]
\centerline{ \includegraphics[width=.6\columnwidth]{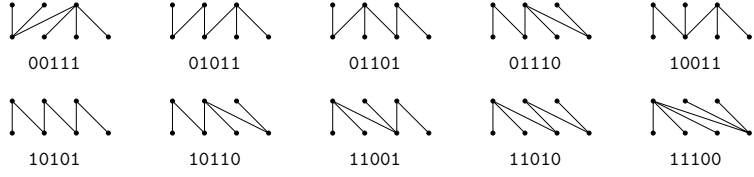}}
\caption{All $B_{34}=10$ combinatorial types of
  $(3,4)$-bridges. If an edge differs from its predecessor at the top
  we write a
{\tt 0}, otherwise a {\tt 1}. We obtain a bijection between the
bridges and sequences with three {\tt 1}s and two {\tt 0}s.}
\label{fig:bridges}
\end{figure}

Observe that the upper and the lower trees are trees on a convex point set.
By considering the bridges as super-nodes, we treat the lower chain as
a convex chain of $n_L$ vertices. Similarly, we think of the upper chain as
a convex chain with $n_U$ vertices.
We have
\begin{align*}
n_U=\Big(1-\sum_{k=1}^n
\frac{2k-1}{2} \alpha_{k*}\Big)m, \qquad \text{and} \qquad
n_L=\Big(1-\sum_{k=1}^n \frac{2k-1}{2} \alpha_{*k}\Big)m.
\end{align*}
 (Notice that the bridges take away all of its vertices, except one,
depending on the orientation.) Since the number of non-crossing
spanning trees on an $n$-element convex point set equals
$\Theta^*((27/4)^n)$~\cite{FN99}, the number of
spanning trees within the two chains is given by
\[N_{\text{trees}}=O^*\Big( \Big( 27/4 \Big) ^{n_L+n_U}\Big).\]
To finish our analysis we have to find the optimal parameters
$\alpha_{ij}$ such that
\begin{equation}\label{eq:LBST} \st(D(n,0^r))=\Omega^*\left( N_L
\cdot N_U \cdot N_{\text{bridges}} \cdot N_{\text{links}} \cdot
N_{\text{trees}} \right)
\end{equation}
is maximized.

Assume that we picked all $\alpha_{ij}$ values, except one
$\alpha_{uv}$, and let $g_{uv}(\cdot)$ be the bound of~\eqref{eq:LBST}
in terms of this only unspecified parameter. It can be observed that
$g_{uv}\equiv g_{vu}$. Hence we can replace all occurrences of
$\alpha_{uv}$ in~\eqref{eq:LBST} with $u>v$ by $\alpha_{vu}$.
To solve the optimization problem we restrict ourselves to a few small
values of $z$. Let us start with the
case $z=1$. Here our analysis coincides with the one in~\cite{D02}. We
have only one parameter $\alpha_{11}$, which maximizes \eqref{eq:LBST} at
$4/(4+3 \sqrt{6})=0.35\ldots$ and yields a lower bound of $\Omega(10.42^n)$. A
first improvement is achieved by considering $z=2$. In this case
\eqref{eq:LBST} is bounded by
$\Omega(11.61^n)$, which is attained at $\alpha_{11}= 0{.}18,
 \alpha_{12}=0{.}055$, $\alpha_{22}=0{.}032$. For the next
step $z=3$, we obtain a lower bound of $\Omega(11.89^n)$, attained at
$\alpha_{11}=0{.}15$, $\alpha_{12}=0{.}043$,
$\alpha_{13}= 0{.}010$, $\alpha_{22}=
0{.}023$, $\alpha_{23}=0{.}0085$, and $\alpha_{33}=
0{.}0040$. The optimal solution for  $z=4$ are listed in
Table~\ref{tab:z4}. The induced bound equals  $\Omega(12.00^n)$.
By further increasing $z$ we get improved bounds. The best bound was
computed for $z=8$, namely $\Omega(12.0026^n)$. The parameters
realizing this bound can be found in Table~\ref{tab:z8}. For larger
$z$ we were not able to complete the numeric computations and got
stuck at a local maximum, whose induced bound was smaller than
$\Omega(12.0026^n)$.
All optimization problems were solved with
computer algebra software.
\begin{table}[htdp]
\begin{minipage}[b]{0.5\linewidth}\centering
\centering
\begin{tabular}{c|| l l l l}
$\alpha_{ij}$ & $j=1$& $j=2$ & $j=3$ & $j=4$ \\
\hline \hline
$i\!=\!1$ & $0.149$ & $0.0403$ & $0.00945$ & $0.00208$ \\
$i\!=\!2$ & - & $0.0218$ & $0.00767$ & $0.00226$ \\
$i\!=\!3$ & - & - & $0.00359$ & $0.00132$ \\
$i\!=\!4$ & - & - & - & $ 0.00058$
\end{tabular}
\end{minipage}
\begin{minipage}[b]{0.5\linewidth}
\centering
\begin{tabular}{c|| l l l l}
$\alpha_{ij}$ & $j=1$& $j=2$ & $j=3$ & $j=4$ \\
\hline \hline
$i\!=\!1$ & $0.151$ & $0.0382$ & $0.00835$ & $0.00172$ \\
$i\!=\!2$ & - & $0.0192$ & $0.00632$ & $0.00226$ \\
$i\!=\!3$ & - & - & $0.00276$ & $0.00094$ \\
$i\!=\!4$ & - & - & - & $0.00039$
\end{tabular}
\end{minipage}
\caption{On the left: Parameters that attain the maximum
  $\Omega(12.00^n)$ for
$\st(D(n,0^r))$ for $z=4$. On the right: Parameters that attain the
  maximum $\Omega(12.26^n)$ for
$\cf(D(n,0^r))$ for $z=4$.}
\label{tab:z4}
\end{table}

\begin{table}
\centering
\begin{tabular}{l||llllllll}
 $\alpha_{ij}$ & $i=1$ & $i=2$ & $i=3$ & $i=4$ & $i=5$ & $i=6$ & $i=7$ & $i=8$ \\
  \hline
\hline
 $j=1$ & 0.144 & 0.0389 & 0.00908 & 0.001994 & 0.000422 & 0.0000856 & 0.0000152 & $1.76\cdot 10^{-6}$ \\
 $j=2$ & - & 0.0209 & 0.00733 & 0.00214 & 0.000569 & 0.000140 & 0.0000313 & $5.12\cdot 10^{-6} $\\
 $j=3$ & - & - & 0.00342& 0.00125 & 0.000397 & 0.000113 & 0.0000290 & $5.50\cdot 10^{-6}$ \\
 $j=4$ & - & - & - & 0.000548& 0.000202 & 0.0000655 & 0.0000181 & $3.34\cdot 10^{-6}$ \\
 $j=5$ & - & - & - & - & 0.0000845 & 0.0000298 & $8.33\cdot 10^{-6}$ & $1.25\cdot 10^{-6}$ \\
 $j=6$ & - & - & - & - & - & 0.0000107 & $2.44\cdot 10^{-6}$ & $5.10\cdot 10^{-7}$ \\
 $j=7$& - & - & - & - & - & - & $1.09\cdot 10^{-6}$ & $1.31\cdot 10^{-7}$ \\
 $j=8$ & - & - & - & - & - & - & - & $6.97\cdot 10^{-8}$
\end{tabular}
\caption{Parameters that realize the maximum
  $\Omega(12.0026^n)$ for
$\st(D(n,0^r))$ for $z=8$.}
\label{tab:z8}

\end{table}

\paragraph{Cycle-free graphs.}
The same approach can be used to bound the number of cycle-free
graphs (\ie {\em forests}) on the double chain. To this end, we update
Equation~\eqref{eq:LBST} by substituting the quantity $N_{\text{trees}}$ with
\[N_{\text{forests}}=\Omega^*\left(8.22469^{n_L+n_U}\right),\]
which is obtained from the bound for the number of forests on a convex
point set~\cite{FN99}. Since we picked the bridges such that they introduce no cycles,
all graphs considered by our scheme are cycle-free graphs. When the
super-node (that represents a bridge) is a singleton in the forest of
the chain, the combined forest could be constructed by different
gluings of that bridge. For this reason we have to ignore the
possibilities that were covered by $N_{\text{links}}$ to avoid
over-counting. Thus we end up with
\[
 \cf(D(n,0^r))=\Omega^*\left( N_L
\cdot N_U \cdot N_{\text{bridges}} \cdot
N_{\text{forests}} \right).
\]

For $z=2$, our method yields the bound $\Omega(11.94^n)$
attained at $\alpha_{11}= 0{.}18$, $\alpha_{12}=0{.}039$,
and $\alpha_{22}=0{.}021$; and for $z=3$ the bound
$\Omega(12.16^n)$ attained at $\alpha_{11}=0{.}12$,
$\alpha_{12}=0{.}031$, $\alpha_{13}= 0{.}0080$, $\alpha_{22}=
0{.}016$, $\alpha_{23}=0{.}0061$, and $\alpha_{33}= 0{.}0031$.
The bound for  $z=4$, namely $\Omega(12.26^n)$, is attained at the
parameters  listed in Table~\ref{tab:z4}.
The best bound was obtained for $z=9$. In this case the parameters in
Table~\ref{tab:z9} imply a bound of $\Omega(12.2618^n)$. Numeric
computations for larger numbers of $z$ gave no improvement
on the bound for $\cf(D(n,0^r))$.
The previous best lower bound by Aichholzer~\etal was
$\Omega(11.62^n)$.
\begin{table}
\centering
\begin{tabular}{l||lllllllll}
 $\alpha_{ij}$ & $i=1$ & $i=2$ & $i=3$ & $i=4$ & $i=5$ & $i=6$ & $i=7$ & $i=8$& $i=9$ \\
  \hline
\hline
$j=1$ 	 & 0.11	 &0.028	&0.0069&	0.0017&	0.00042&	0.00010&	0.000024&	$5.3\cdot10^{ -6}$&	$1.2\cdot10^{ -6}$\\
$j=2$	& -	&0.014	&0.0051&	0.0017&	0.00052&	0.00015&	0.000042	&0.000011&	$2.7\cdot10^{ -6}$\\
$j=3$	& -	& -&	0.0025	&0.0010&	0.00038&	0.00013&	0.000042&	0.000012&	$3.4\cdot10^{ -6}$\\
$j=4$	& -	& -	& -	&0.00051	&0.00022	&0.000086	&0.000030	&0.000010	&$3.1\cdot10^{ -6}$\\
$j=5$	& -	& -	& -	& -&	0.00011	&0.000047	&0.000018	&$6.5\cdot10^{ -6}$	&$2.2\cdot10^{ -6}$\\
$j=6$	& -	& -	& -	& -	& -&	0.000024	&$9.4\cdot10^{ -6}$&	$3.7\cdot10^{ -6}$&	$1.4\cdot10^{ -6}$\\
$j=7$	& -	& -	& -	& -	& -	& -	&$5.6\cdot10^{ -6}$&	$1.8\cdot10^{ -6}$	&$7.5\cdot10^{ -7}$\\
$j=8$	& -	& -	& -	& -	& -	& -	& -	&$1.5\cdot10^{ -6}$	&$3.9\cdot10^{ -7}$\\
$j=9$	& -	& -	& -	& -	& -	& -	& -	& -	&$4.4\cdot10^{ -7}$\\
\end{tabular}
\caption{Parameters that realize the maximum
  $\Omega(12.0026^n)$ for
$\cf(D(n,0^r))$ for $z=9$.}
\label{tab:z9}
\end{table}

Notice that our analysis relies on numeric computations. We were able
to compute the optimal parameters up to $z=8$ (for spanning trees) and up
to $z=9$ (for cycle-free graphs). Table~\ref{tab:zz} lists the computed
bounds with respect to the chosen parameter $z$. No significant
improvements can be expected by solving the induced maximizations
problems for larger values of $z$.
\begin{table}
\centering
\begin{tabular}{l||lllllllll}
  & $z=1$ & $z=2$ & $z=3$ & $z=4$ & $z=5$ & $z=6$ & $z=7$ & $z=8$ & $z=9$ \\
  \hline
\hline
 $\st$ & 10.424 & 11.611 & 11.899 & 12.004 & 11.952 & 11.998 & 12.002 & 12.002 & - \\
 $\cf$ &  11.092 & 11.944 & 12.169 & 12.260 & 12.251 & 12.258 & 12.260 & 12.261 & 12.261 \\
 \end{tabular}
\caption{Bases of the asymptotic exponential bounds for the
  number of spanning trees $\st$ and the number of cycle-free graphs
  $\cf$ for the point set  $D(n,0^r))$ in terms of the parameter $z$.}
\label{tab:zz}
\end{table}

For comparison, we compute an easy upper bound for both
$\st(D(n,0^r))$ and $\cf(D(n,0^r))$.
Every forest $F$ on the double chain splits into three parts: the
induced subgraph on $L$ ($F_L$),
the induced subgraph on $U$ ($F_U$) and the remaining part given by
the bridge edges ($F_B$). The graphs $F_L$, $F_U$, and $F_B$ are
non-crossing forests.
Since the number of forests on $n$ points in convex position is
$O(8.225^n)$, there are $O(8.225^n)$ different graphs $F_L \cup F_U$
(we rounded up to avoid the $O^*$ notation).
%
It remains to bound the number of graphs $F_B$. Every graph counted in
$F_B$ can be turned into a triangulation by deleting the ``non-bridge
vertices'' and adding the appropriate edges on $L$ and $U$. Let $k$ be
the number of bridge vertices on each chain. We have ${m \choose k}^2$
ways to select the bridge vertices. By Equation~\eqref{E7} we have
$\Theta^*(4^{k})$ triangulations of the bridge vertices. In total we
can bound the number of graphs $F_B$ from above by
 $\sum_{k=0}^m { m \choose k }^2 4^k.$
To bound the exponential growth of this sum we find the dominating
summand. Notice that ${ m \choose k' }^2 4^{k'}$ is maximized when
${ m \choose k' } 2^{k'}$ is maximized.
Since $\sum_{k=0}^m { m \choose k } 2^k=3^m$, we have $ { m \choose k' } 2^{k'}<3^m$,
and so
\[{ m \choose k }^2 4^{k} \leq { m \choose k' }^2 4^{k'}<9^m=3^n.\]

Therefore the number of cycle-free non-crossing graphs on $D(n,0^r)$ is
bounded from above by
$O( n \cdot 3^n \cdot 8.225^n)=O(24.68^n)$,
which is also an upper bound for the number of non-crossing spanning
trees on $D(n,0^r)$.
\end{proof}

\section{Upper bound for the number of non-crossing spanning cycles}
\label{sec:cycles}

Newborn and Moser~\cite{NM80} asked what is the maximum number
of non-crossing spanning cycles for $n$ points in the plane, and they proved
that $\Omega((10^{1/3})^n)\leq \sc(n) \leq O(6^n \lfloor\frac{n}{2}\rfloor !)$.
The first exponential upper bound $\sc(n) \leq 10^{13n}$, obtained by
Ajtai~\etal\cite{ACNS82}, has been followed by a series of
improved bounds, \eg see~\cite{BKK+07,D99,SW06}; a more comprehensive history
can be found in~\cite{Dem10}. The current best lower bound
$\sc(n) \geq 4.462^n$ is due to Garc\'{\i}a~\etal\cite{GNT00}.
The previous best upper bound $O(70.22^n)$ follows from combining the
upper bound $30^{n/4}$ of Buchin~\etal~\cite{BKK+07} on the number of
spanning cycles in a triangulation with a new upper bound of $\tr(n) \leq 30^n$
on the number of triangulations by Sharir and Sheffer~\cite{SS10}
($30^{5/4}=70.21\ldots$).
The bound by Buchin~\etal\cite{BKK+07} cannot be improved much further,
since they also present triangulations with $\Omega(2.0845^n)$ spanning cycles.
However, the approach of multiplying $\tr(n)$ with the maximum number
of spanning cycles in a triangulation seems rather weak, since it 
potentially counts some spanning cycles many times.

Let $C$ be a spanning cycle. We say that  $C$ has a \emph{support} of
$x$, and write $\supp(G)=x$ if $C$ is contained in $x$ triangulations
of the point set.

Observe that a spanning cycle $C$ will be counted $\supp(C)$ times in
the preceding bound.
To overcome this inefficiency, we use the concept of \emph{pseudo-simultaneously
flippable edges} (ps-flippable edges, for short), introduced in~\cite{HSSTW11}.
A set $F$ of edges in a triangulation is \emph{ps-flippable} if after
deleting all edges in $F$, the bounded faces are convex and jointly
tile the convex hull of the points.
One can obtain a lower bound for the support of a spanning cycle $C$
in terms of the number of ps-flippable edges that are {\em not} in $C$.
The following two properties of ps-flippable edges derived
in~\cite{HSSTW11} are relevant to us:
\begin{enumerate}[(i)]
\item Every triangulation on $n$ points contains at least
  $\frac{n}{2}-2$ ps-flippable edges. \label{pro:lower}
\item Let $C$ be a spanning cycle. Consider a triangulation $T$ that contains $C$, and
       has a set $F$ of ps-flippable edges. If $j$ or more edges of
       $F$ are not contained in $C$, then $\supp(C) \ge 2^j$.
\end{enumerate}

We now use ps-flippable edges to improve the upper bound for the
number of spanning cycles.
\begin{theorem}\label{thm:polygonize}\label{th:triVSsc}
$\displaystyle \sc(n) = O \left(68.62^n \right). $
\end{theorem}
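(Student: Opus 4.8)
The plan is to improve the naive bound $\sc(n) \le \tr(n) \cdot (\text{max cycles per triangulation})$ by accounting for the overcounting: each spanning cycle $C$ is counted exactly $\supp(C)$ times when we sum over all triangulations, so the true count is $\sc(n) = \sum_{C} 1 = \sum_T |\{C \subseteq T\}| \big/ \text{(avg support)}$, and a lower bound on $\supp(C)$ translates directly into a savings factor. Concretely, I would start from the double-counting identity

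\begin{equation*}
\sum_{C} \supp(C) = \sum_{T} c(T),
\end{equation*}

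where $c(T)$ is the number of spanning cycles contained in the triangulation $T$. The right-hand side is at most $\tr(n) \cdot \max_T c(T) \le 30^n \cdot 30^{n/4}$ using $\tr(n) \le 30^n$ from~\cite{SS10} and the Buchin~\etal~\cite{BKK+07} bound $c(T) \le 30^{n/4}$. If I can show $\supp(C) \ge 2^{cn}$ for every spanning cycle $C$ and some constant $c>0$, then $\sc(n) \le 30^{5n/4} / 2^{cn}$, which beats the old $O(70.22^n)$.

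The heart of the argument is lower-bounding $\supp(C)$ via ps-flippable edges, using property~(ii): if some triangulation $T \supseteq C$ has at least $j$ ps-flippable edges outside $C$, then $\supp(C) \ge 2^j$. So I would fix a spanning cycle $C$, extend it to any triangulation $T$, and apply property~(i): $T$ has at least $n/2 - 2$ ps-flippable edges. The issue is that many of these flippable edges might lie \emph{on} $C$, contributing nothing. The key step is therefore to argue that one can choose $T$ (or manipulate a given one) so that a constant fraction of its guaranteed ps-flippable edges avoid $C$. Since $C$ has exactly $n$ edges and a triangulation has about $3n$ edges, at most $n$ of the $\ge n/2-2$ ps-flippable edges can coincide with cycle edges in the worst case — but this crude accounting gives nothing, so the real work is a more careful counting argument showing that at least $j = \beta n$ ps-flippable edges of $T$ fall outside $C$ for some explicit $\beta$.

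I expect the main obstacle to be exactly this: establishing a robust lower bound on the number of ps-flippable edges \emph{not} belonging to the cycle. The naive bound $n/2 - 2$ flippable edges may be entirely consumed by $C$, so one needs either to average over the triangulations extending $C$, or to exploit additional structure — for instance, partitioning the faces incident to $C$ and showing combinatorially that $C$ cannot ``use up'' all the flippable edges. Once $j \ge \beta n$ is secured, giving $\supp(C) \ge 2^{\beta n}$, the arithmetic is routine: $\sc(n) \le 30^{5n/4}/2^{\beta n}$, and one selects the construction/averaging so that $30^{5/4}/2^{\beta} \le 68.62$, i.e. $2^\beta \ge 70.21\ldots/68.62$, which requires only a modest $\beta$ (roughly $\beta \ge 0.033$). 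The delicate part is purely combinatorial-geometric: converting the global guarantee of property~(i) into a guarantee of flippable edges disjoint from a prescribed Hamiltonian cycle.
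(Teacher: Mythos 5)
Your reduction hinges on a uniform lower bound $\supp(C)\ge 2^{\beta n}$ holding for \emph{every} spanning cycle $C$ of \emph{every} point set, and this is not merely the hard part --- it is false, so the approach cannot be repaired in the form you propose. The obstacle you flag (all of the $\ge n/2-2$ ps-flippable edges of a triangulation $T\supseteq C$ may lie on $C$) is a real phenomenon, not an artifact of crude accounting: take a point set whose points form a Hamiltonian simple polygon with a unique interior triangulation (e.g.\ a spiral/comb-like polygon) and rigid exterior pockets. For such a cycle $C$, no diagonal interior to the polygon is flippable at all, the ps-flippable edges of any triangulation containing $C$ are forced onto $C$ itself (flipping them destroys the cycle), and $\supp(C)$ is constant or polynomial rather than exponential. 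Since $\sc(n)$ is a maximum over all point sets, a single such configuration reduces your inequality $\sc(P)\cdot\min_C\supp(C)\le\sum_T c(T)$ to the trivial bound $\sc(P)\le 30^{5n/4}$, i.e.\ no improvement over $O(70.22^n)$. No combinatorial lemma can secure ``$\beta n$ ps-flippable edges disjoint from a prescribed $C$,'' because the statement itself fails.

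The actual proof sidesteps the need for any uniform support bound by working with the exact identity $\sc(P)=\sum_T\sum_{C\subseteq T}1/\supp(C)$ and exploiting a \emph{trade-off inside each triangulation}: fix $T$ with its set $F$ of $n/2-2$ ps-flippable edges, and stratify the cycles $C\subseteq T$ by the number $k$ of edges of $F$ that $C$ uses. Cycles with small $k$ have support at least $2^{n/2-2-k}$ (property (ii)), so they are heavily discounted; cycles with large $k$ get only a weak support guarantee, but there are provably few of them --- this is shown by splitting each cycle (for $n$ even) into two disjoint perfect matchings $M_1, M_2$ and applying the bound of Buchin~\etal\cite{BKK+07} on the number of perfect matchings in a sparse plane graph, where the requirement of containing many edges of $F$ makes the graph available to $M_2$ much smaller. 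Optimizing the threshold between the two regimes (at $k\approx 0.4669\,n$) gives $\sum_{C\subseteq T}1/\supp(C)=O(2.28728^n)$ per triangulation, hence $\sc(P)=\tr(P)\cdot O(2.28728^n)=O(68.62^n)$ using $\tr(P)<30^n$~\cite{SS10} (odd $n$ is handled by adding an exterior point). Your parenthetical suggestion of ``averaging over the triangulations extending $C$'' gestures toward the right philosophy --- replace a worst-case bound by a weighted count --- but the decisive idea you are missing is to trade the number of cycles against their support within each fixed triangulation, rather than to seek a support guarantee valid for all cycles.
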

\begin{proof}
Let $P$ be a set of $n$ points in the plane. Assume first that $n$ is even.
The exact value of $\sc(P)$ is
\begin{equation} \label{eq:optimal}
\sc(P) = \sum_{T} \sum_{C \subset T} \frac{1}{\supp(C)},
\end{equation}
where the first sum is taken over all triangulations of $P$, and the second sum is
taken over all spanning cycles contained in $T$.

Consider a triangulation $T$ over the point set $P$. By property~(\ref{pro:lower}),
there is a set $F$ of $n/2-2$ ps-flippable edges in $T$. We wish to bound the number
of spanning cycles that are contained in $T$ and use exactly $k$ edges from $F$.
The support of any such cycle is at least $2^{|F|-k}$ by property~(ii).
There are $\binom{|F|}{k}<\binom{n/2}{k}$ ways to choose $k$ edges from $F$.

Since $n$ is even, every spanning cycle in $T$ is the union of two
disjoint perfect matchings in $T$. Instead of spanning cycles containing
exactly $k$ edges of $F$, we count the number of pairs of disjoint
perfect matchings, $M_1$ and $M_2$, that jointly contain exactly $k$ edges of
$F$. Assume, without loss of generality, that $M_2$ contains at least as
many edges from $F$ as $M_1$.
That is, $M_1$ contains at most $k/2$ edges from $F$, and $M_2$ at least $k/2$.

As observed by Buchin~\etal~\cite{BKK+07}, a plane graph with $v$
vertices and $e$ edges contains at most $O(( 2e/v )^{v/4})$ perfect
matchings. By Euler's formula,
a triangulation on $n$ points has at most $3n - 6$ edges.
Moreover, we know that both, $M_1$ and $M_2$, have to avoid the $|F|-k$
edges from $F$ that were not selected, thus it is safe to remove
$|F|-k$ edges from the triangulation. Since
$$ (3n-6)-(n/2-2-k)=5n/2+k-4 \leq 5n/2+k, $$
the remaining graph has at most $5n/2+k$
edges and by the previous expression of Buchin~\etal, there are
$O\left((5+2k/n)^{(n/4)}\right)$ ways to choose $M_1$.

After choosing $M_1$, we know of at least $k/2$ edges of $F$ which
participate in $M_2$ (the edges that were chosen from $F$ and do not participate in $M_1$).
We can remove the endpoints of these edges (together with all
incident edges). The resulting graph has at most $n-k$ vertices. We can also remove
all remaining edges of $F$ and $M_1$, since they cannot be in $M_2$. It can be easily
checked that overall we removed $n-2-k/2$ edges (recall that $|M_1| = n/2$,
and that it can hold at most $k/2$ edges of $F$). Thus, given a
matching $M_1$, the number of candidates for $M_2$ is
$$ O\left(\left( \frac{2(3n-(n-k/2))}{n-k} \right)^{(n-k)/4} \right)
= O\left(\left( \frac{4n+k}{n-k} \right)^{(n-k)/4} \right) .$$

We conclude that the number of spanning cycles containing exactly $k$
edges of $F$ is
\begin{equation*} \label{eq:countK}
O \left(\binom{n/2}{k} \cdot (5+2k/n)^{(n/4)} \cdot \left(
\frac{4n+k}{n-k} \right)^{(n-k)/4} \right).
\end{equation*}

When $k$ is small, \ie $k \leq an$, for some $a \in (0,1/2)$, it is
better to use the bound $30^{n/4}$ instead
(which bounds the total number of spanning cycles in $T$).
Substituting these bounds into~\eqref{eq:optimal} yields
\[ \sc(P) = \sum_{T} \left( \sum_{k = 0}^{an} O \left(
\frac{30^{(n/4)}}{2^{n/2-k}}\right) + \ns  \sum_{k = an}^{n/2}O \left(
\binom{n/2}{k} \cdot \left( \frac{4n+k}{n-k} \right)^{(n-k)/4} \ns
\cdot  \frac{(5+2k/n)^{(n/4)}}{2^{n/2-k}} \right) \right). \]
The maximum is attained for $a \approx 0.466908$, which implies
\[ \sc(P) = \sum_{T} O\left(2.28728^n \right) = \tr(P) \cdot O\left(2.28728^n \right). \]
The upper bound $\sc(n) = O(68.62^n)$ is immediate by substituting
the upper bound $\tr(P)<30^n$ from~\cite{SS10} into the above expression.

Finally, the case where $n$ is odd can be handled as follows. Create a new point $p$
outside of $\conv(P)$, and put $P' = P \cup \{p\}$.
It can be shown that $\sc(P) \le \sc(P')$ by mapping every spanning cycle of $P$ to
a distinct spanning cycle of $P'$. Given a non-crossing cycle $C$ of $P$,
 $p$ can be connected to the two endpoints of some
edge of $C$ without crossings~\cite[Lemma~2.1]{HKRT08}. We can then apply the
above analysis for $P'$, obtaining the same asymptotic bound.
\end{proof}

\section {Weighted geometric graphs} \label{sec:weighted}
In this section we derive bounds on the maximum multiplicity of various weighted
geometric graphs (weighted by Euclidean length).

\subsection {Longest perfect matchings}
Let $n$ be even, and consider perfect matchings on a set of $n$
points in the plane. It is easy to construct $n$-element point sets
(no three of which are collinear) with an exponential number
of {\em longest} perfect matchings:~\cite{D02} gives constructions with
$\Omega(2^{n/4})$ such matchings.
Moreover, the same lower bound can be achieved with yet another restriction,
convex position, imposed on the point set; see~\cite{D02}.
Here, we present constructions with an exponential number of {\em maximum}
(longest) non-crossing perfect matchings.

\begin{theorem} \label{T2}
For every even $n$, there exist $n$-element point sets with
at least $2^{\lfloor n/4\rfloor}$ longest non-crossing
perfect matchings. Consequently, $\pm_{\rm max}(n) =\Omega(2^{n/4})$.
\end{theorem}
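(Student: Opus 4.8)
The plan is to build an explicit point configuration that forces exponentially many longest non-crossing perfect matchings. The natural idea is to group the $n$ points into roughly $n/4$ independent ``gadgets,'' each contributing a binary choice, and to arrange these gadgets so that the choices are mutually independent (both in terms of total length and in terms of the non-crossing condition). If each gadget consists of $4$ points admitting two distinct matchings of equal maximum length, and the gadgets are placed far apart from one another, then any combination of per-gadget choices yields a global matching of the same total length, and that common value will be the maximum. This immediately gives $2^{\lfloor n/4 \rfloor}$ matchings, matching the claimed bound.

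Concretely, I would first design a single gadget on $4$ points $\{a,b,c,d\}$ such that the longest perfect matching of these four points (among all three matchings $\{ab,cd\}$, $\{ac,bd\}$, $\{ad,bc\}$) is attained by exactly two of them, and moreover those two longest matchings are both non-crossing. For instance, placing the four points as a very ``flat'' convex quadrilateral (or a suitable trapezoid) makes the two ``diagonal-avoiding'' long matchings tie in length while the third is strictly shorter; one then checks that the two maximizers do not cross. The key computation is just comparing three sums of two Euclidean distances, which I would verify for explicit coordinates rather than in generality. I would take care that no three of the four points are collinear, so the general-position hypothesis is preserved.

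Second, I would take $\lfloor n/4 \rfloor$ translated copies of this gadget, placed so that the copies are pairwise ``well-separated'' and mutually visible in a way that guarantees no edge of one gadget's matching can cross an edge of another's, and no edge joining two different gadgets could ever be part of a longest matching. The cleanest way to arrange this is to string the gadgets along a line (or a slightly convex arc) with large gaps between consecutive gadgets, so that any inter-gadget edge is forced to be long enough that including it would prevent completing a globally longest matching; alternatively, and more robustly, I would argue that because within each gadget the two chosen edges already realize the local maximum, any matching that pairs a point of one gadget with a point of another is strictly shorter, by making the intra-gadget distances dominate. Then the global longest matching must match each gadget internally, and within each gadget there are exactly two optimal choices, giving $2^{\lfloor n/4 \rfloor}$ distinct longest non-crossing matchings.

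The main obstacle I anticipate is the separation/independence argument, namely proving rigorously that no longest matching ever uses an inter-gadget edge and that the two per-gadget optima are genuinely non-crossing when the gadgets are combined. This requires a careful geometric placement and a clean inequality showing that breaking a gadget to form a cross-gadget pairing strictly decreases the total length; I would handle it by choosing the scale of each gadget to be small relative to the inter-gadget spacing, so that an exchange argument (swapping a cross-gadget pair back into two intra-gadget pairs) always increases length. If $n$ is not divisible by $4$, the leftover $1$, $2$, or $3$ points (with $n$ even, the leftover is $2$) can be absorbed into a single fixed forced edge or a fixed partial matching that does not affect the count, so the bound $2^{\lfloor n/4 \rfloor}$ stands in all even cases. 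The remaining verifications are routine distance comparisons, so I would not grind through them here.
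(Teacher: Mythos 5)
Your plan has a genuine gap, and it is the central step: the separation argument is backwards for \emph{longest} matchings. Placing small gadgets far apart (strung along a line with large gaps) is exactly the right strategy for \emph{shortest} matchings, but for longest ones it fails: if each gadget has diameter $\delta$ and distinct gadgets are at mutual distance $D\gg\delta$, then a matching that pairs every gadget internally has total length at most $\tfrac{n}{2}\delta$, while a bipartite matching pairing the points of one gadget with the points of another has every edge of length roughly $D$. Moreover, such a cross-gadget matching can always be chosen non-crossing (two equal-size point sets separated by a line admit a non-crossing perfect bipartite matching), so the longest non-crossing matching in your layout will certainly use inter-gadget edges, and the claimed per-gadget independence collapses. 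Your own text contains the contradiction: you want ``the intra-gadget distances to dominate'' while simultaneously ``choosing the scale of each gadget to be small relative to the inter-gadget spacing''; these are opposite requirements, and with your layout the exchange argument runs the wrong way --- swapping a cross-gadget pair into two intra-gadget pairs \emph{decreases} the length.

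There is also a local flaw in the gadget you sketch: for four points in convex position (flat quadrilateral or trapezoid), the crossing diagonal matching is \emph{strictly} the longest of the three (split each diagonal at the crossing point and apply the triangle inequality twice), so it is impossible for two non-crossing matchings to tie as the overall longest; the tying pair must come from a non-convex gadget, with one point inside the triangle of the other three. The paper resolves both issues at once: its gadget $S_4=\{a,b,c,d\}$ has $a,b$ close together, $c,d$ close together on the perpendicular bisector of $ab$ but at distance about $2n$, so the two tying matchings $\{ac,bd\}$ and $\{ad,bc\}$ consist of nearly diameter-length edges; then all $n/4$ gadgets are \emph{interleaved}, with every $a_i,b_i$ in one unit disk and every $c_i,d_i$ in another unit disk far away, stacked in disjoint horizontal strips. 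A length count shows any matching with fewer than $n/2$ left-to-right edges is strictly shorter than the constructed ones, so every longest matching is fully bipartite between the two clusters, and then the non-crossing condition forces each edge to stay inside its own strip, i.e., its own gadget. That ``stretch each gadget across the whole configuration'' idea is precisely what is missing from your proposal.
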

\begin{proof}
Assume first that $n$ is a multiple of $4$.
Let $S_4=\{a,b,c,d\}$ be a $4$-element point set such that segment
$ab$ is vertical, $cd$ lies on the orthogonal bisector of $ab$ (hence,
$|ac|=|bc|$ and $|ad|=|bd|$), $|ab|=|cd|=\frac{1}{n}$ and
$\min\{|ac|,|ad|\}=|ac|=|bc|=2n$.
Then $S_4$ has two maximum matchings, $\{ac,bd\}$ and $\{ad,bc\}$, each of which
has length at least $4n$.
Let the $n$-element point set $P$ be the union of $n/4$ translated copies of $S_4$
lying in disjoint horizontal strips such that the copies of $a$ are
almost collinear, all the copies of points $a$ and $b$ lie in a disk
of unit diameter, and all the copies of points $c$ and $d$ lie in a disk
of unit diameter; see Fig.~\ref{fig:longmatching}.
\begin{figure}[htbp]
\centerline{\epsfxsize=5.5in \epsffile{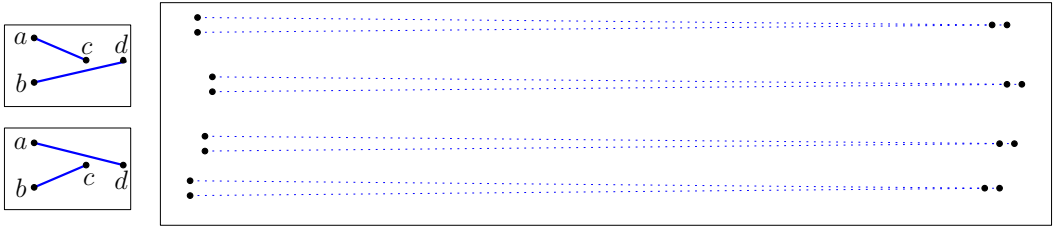}}
\caption{Left: Two possible maximum matchings for the point set
  $S_4=\{a,b,c,d\}$. Right: A set of $n=16$ points that admit $2^4$
  maximum non-crossing perfect matchings.}
\label{fig:longmatching}
\end{figure}

If we combine the maximum matchings of all copies of $S_4$, then we obtain
$2^{n/4}$ non-crossing perfect matchings of $P$. All these matchings
have the same length,
which is at least $\frac{n}{4}\cdot 4n=n^2$. We show that this is the
{\em maximum} possible length of a non-crossing perfect matching of $P$.
Let $L$ be the set of the translated copies of points $a$ and $b$; and
let $R$ be the set of the translated copies of points $c$ and $d$.
Then the diameter of $L$ (resp., $R$) is at most $1$, by
construction. The length of any edge between $L$ and $R$ is at most
$2n+1$, while all other edges have length at most $1$.
If a matching has $k$ edges between $L$ and $R$, then its length is at most
$k(2n+1)+(\frac{n}{2}-k)=2kn+\frac{n}{2}$, which is less than $n^2$ for
$k<n/2$. Hence a maximum matching on $P$ is a bipartite graph between $L$ and
$R$. To avoid crossings, every edge in such a bipartite graph
must connect points in the same copy of $S_4$.

If $n$ is a multiple of $4$ plus $2$, the above construction is
modified by adding a pair of points at distance $2n$, placed
horizontally above all copies of $S_4$. This pair of points must be
matched in any maximum non-crossing perfect matching of $P$.
In both cases, $P$ admits at least $2^{\lfloor n/4\rfloor}$ longest non-crossing
perfect matchings, as required.
\end{proof}

\subsection {Non-crossing spanning trees}
For minimum spanning trees,~\cite{D02} gives constructions
for $n$-element point sets that admit
$\Omega(2^{n/4})$ such trees, and moreover, this lower bound
can be attained with points in convex position. All these
constructions give non-crossing spanning trees.
For maximum spanning trees,~\cite{D02} gives the following
construction: start with two points,
$a$ and $b$, and suitably place the remaining $n-2$ points on the
perpendicular bisector of segment $ab$. While this configuration
admits $\Omega(2^{n})$ maximum non-crossing spanning trees
(which have maximum weight over all spanning trees), it uses (a large number of)
collinear points. Next we show that an exponential bound, $\Omega(2^n)$, can be
achieved without allowing collinear points, and moreover, with points
in convex position.

\begin{theorem} \label{T3}
The vertex set of a regular convex $n$-gon admits
$\Omega(2^n)$ longest non-crossing spanning trees.
Consequently, $\st_{\rm max}(n)=\Omega(2^n)$.
\end{theorem}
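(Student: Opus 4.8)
The plan is to prove the theorem in two parts: first pin down the maximum length $\ell_n$ of a non-crossing spanning tree of the regular $n$-gon together with the rigid structure that attains it, and then count the attaining trees. Throughout I label the vertices $v_0,\dots,v_{n-1}$ in convex position on a circle of radius $R$, so that the chord $v_iv_j$ has length $2R\sin(\pi d_{ij}/n)$, where $d_{ij}=\min(|i-j|,\,n-|i-j|)\le n/2$ is its \emph{span}. Since $\sin(\pi d/n)$ is strictly increasing in $d$ on $[0,n/2]$, longer span means a longer edge, so ``using long edges'' is exactly the quantity to maximize — subject to the non-crossing constraint, which is the only thing preventing us from simply taking the $n-1$ longest chords.

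First I would establish a structural (optimality) lemma by a local exchange argument. Two near-diametral chords cross, so a non-crossing tree can use very few of the very longest chords; the heart of the lemma is to show that a longest tree has a recursive, ``nested caterpillar'' shape: one longest usable edge $e$ splits the remaining vertices into two convex arcs, and within each arc the interior vertices are attached greedily by the longest chords that keep the drawing plane and acyclic. Concretely, for even $n=2m$ one shows that exactly one diameter $\{v_i,v_{i+m}\}$ is used, after which each of the two arcs of $m-1$ interior vertices is connected by a nested sequence of chords of decreasing span; for odd $n=2m+1$ the two longest chords at each vertex (to $v_{i+m}$ and $v_{i+m+1}$) have \emph{equal} length, which makes the nesting even cleaner. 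Comparing any tree to this canonical one edge by edge, using the monotonicity and concavity of $d\mapsto\sin(\pi d/n)$ together with the interleaving relation of crossing chords, forces every longest tree to realize one fixed span profile, and hence yields the exact value $\ell_n$.

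With the profile fixed, I would count. The canonical nested structure leaves, at essentially every interior vertex, a genuine binary choice: a leaf can be attached to either of two backbone vertices by two chords of equal (optimal) length, or, for odd $n$, via either of its two longest chords — and these choices are mutually independent, since they live in disjoint ``ears'' of the nesting and never create a crossing or a cycle. Collecting one factor of $2$ from each of the $n-O(1)$ such vertices (with an extra factor from the choice of the splitting edge $e$) gives at least $c\cdot 2^{n}$ distinct non-crossing spanning trees, each of length exactly $\ell_n$. This is already visible in the base case $n=4$, where the diagonal can be chosen two ways and each of the two remaining leaves has two equal-length attachments, giving $2^{3}=\Omega(2^{4})$ longest trees.

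The main obstacle is the optimality lemma — i.e.\ the upper bound $\ell_n$ — rather than the construction. The non-crossing constraint interacts delicately with greediness: one cannot pile up maximal chords (the long diagonals pairwise cross, as the pentagram $\{5/2\}$ already shows), so the exchange argument must carefully verify that each local improvement step both shortens no edge below its canonical span and preserves planarity. Making this rigorous requires a clean interleaving characterization of when two chords cross, the inequality that replacing a crossing pair by the corresponding nested pair cannot decrease total length (a Ptolemy-type consequence of the concavity of $\sin(\pi d/n)$), and a uniform treatment of the two parities of $n$; the counting in the preceding paragraph is then routine.
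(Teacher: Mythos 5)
Your construction-and-counting half is essentially right, and in fact it matches what the paper does: the nested ``zigzag'' trees you describe are exactly the trees obtained by growing a contiguous arc one vertex at a time, attaching each new vertex to the far end of the current arc, and encoding the left/right choice by a bit; all $2^{n-2}$ bit strings give distinct non-crossing trees of the same weight, and your $n=4$ check is correct. (One inaccuracy: the binary choices are not independent per-vertex attachments ``by two chords of equal length'' --- which chord a given vertex receives depends on all earlier choices --- they are sequential choices along the growing arc; but this does not affect the count.)

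The genuine gap is the optimality lemma, which you correctly identify as the crux but for which your proposed tools do not work. The central inequality you invoke is backwards: in a convex quadrilateral the two diagonals (the crossing pair) are together strictly \emph{longer} than either pair of opposite sides, by the triangle inequality applied at the crossing point; hence replacing a crossing pair by a nested pair strictly \emph{decreases} total length. This is exactly why a local exchange argument is problematic here: starting from an arbitrary non-crossing tree, any length-increasing exchange wants to introduce a crossing, which is forbidden, so there is no monotone sequence of planarity-preserving improving moves driving an arbitrary longest tree toward your canonical form. The paper proves the upper bound without any exchanges, by a global counting argument: in any non-crossing spanning tree $T$ of the regular $n$-gon, the edges of span at least $i$ form a forest whose vertices avoid at least $2(i-1)$ of the $n$ points (points lying under two ``extremal'' edges of span at least $i$), so their number satisfies $N_i \le n-2i+1$; writing $L(T)=\sum_i N_i(\ell_i-\ell_{i-1})$ with $\ell_0=0$ and applying this bound term by term gives $L(T)\le L(S_p)$, the weight of a star centered at a vertex. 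Without this lemma (or a rigorous substitute), your $2^{n-O(1)}$ equal-weight trees are not certified to be longest, so the theorem is not established.
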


Before proving the theorem we introduce some notation, which we will
also use in Section~\ref{sec:tours}.
For a point set in convex position the \emph{span} of an edge $xy$
is the smallest number of convex hull edges one has to traverse
when going from $x$ to $y$ along the convex hull (in clockwise or counterclockwise direction).
The weight of a tree $T$, denoted as $L(T)$, is the sum of its edge lengths.

\paragraph{Proof of Theorem~\ref{T3}.}
Let $P$ be the vertex set of a regular $n$-gon inscribed in a circle of unit radius.
Let $p\in P$ be an arbitrary element of $P$, and let $S_p$ be the star
centered at $p$ (consisting of segments connecting $p$ to the other
$n-1$ points). We first prove a counterpart for spanning trees (Lemma~\ref{L1})
of Alon~\etal\cite[Lemma 2.2]{ARS95}, established for non-crossing matchings.
Our argument is similar to that used in~\cite{ARS95}.

\begin{lemma} \label{L1}
A longest non-crossing spanning tree on $P$ has weight $L(S_p)$.
\end{lemma}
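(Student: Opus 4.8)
The plan is to show that the star $S_p$ achieves the maximum weight among all non-crossing spanning trees on the regular $n$-gon $P$, following the strategy of Alon, Rajagopalan, and Suri \cite{ARS95}. Since $P$ lies in convex position, the weight of any edge $xy$ is a strictly increasing function of its span (because vertices of a regular polygon are equally spaced on the circle, and chord length increases with the angular separation up to the diameter). So the problem reduces to maximizing the total span of a non-crossing spanning tree. The star $S_p$ has a natural span profile: it uses edges of spans $1, 2, \ldots, \lfloor n/2 \rfloor, \ldots$ reaching across to the far side of the polygon, and I will compute $L(S_p)$ explicitly as the benchmark to beat.

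First I would set up the key structural observation: in any non-crossing spanning tree $T$ on a convex point set, the edges define a planar (outerplanar) graph, and one can bound the total weight by a clever summation over the convex hull edges. The cleanest approach is to assign to each of the $n$ convex hull boundary gaps a ``crossing count'' — namely, for each position between two cyclically consecutive hull vertices, count how many tree edges separate the two arcs on either side. An edge of span $s$ contributes to exactly $\min(s, n-s)$ such gaps on the shorter side, so the total weight, being monotone in span, is controlled by how many edges cross each gap. The main inequality I would aim for is that the sum of spans of a non-crossing spanning tree is maximized by the star, using the fact that a tree has exactly $n-1$ edges and non-crossing edges cannot ``stack up'' too heavily across any single gap.

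The cleaner route, and the one I expect to be decisive, mirrors the matching argument in \cite{ARS95}: argue by an exchange/charging argument. Suppose $T$ is a longest non-crossing spanning tree. If $T \neq S_p$ for every $p$, find a non-star configuration and exhibit a local replacement of two edges by two longer edges that preserves the spanning and non-crossing properties, contradicting maximality — or at least show the weight does not decrease, so that some star is always among the optima. Concretely, I would take any edge $xy$ of $T$ that is not incident to the would-be center and show that rerouting through a vertex on the far side of the polygon strictly increases total length while keeping planarity, because the longest chords (diameters and near-diameters) are exactly those of maximal span. The crossing-freeness must be carefully maintained: when I lengthen an edge I must verify the new edge does not intersect existing tree edges, which is where convex position is essential.

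The hard part will be making the exchange argument airtight with respect to the non-crossing constraint: lengthening an edge generically increases its span and hence risks crossing other edges, so the replacement must be chosen to respect the planar embedding. I anticipate the right invariant is to root the tree and process edges so that each swap stays within an empty region (a ``pocket'' of the outerplanar structure) guaranteed crossing-free by convexity. Once the exchange argument establishes that some star $S_p$ is optimal, symmetry of the regular $n$-gon gives that \emph{every} $S_p$ is a longest non-crossing spanning tree of common weight $L(S_p)$, which is exactly the statement of Lemma~\ref{L1}; this shared optimal value is then precisely what the subsequent count of $\Omega(2^n)$ distinct longest trees in Theorem~\ref{T3} will exploit.
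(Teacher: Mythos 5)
Your proposal does not contain a proof; both routes you sketch have genuine gaps. The first gap is in your opening reduction: since chord length in a regular $n$-gon is $2\sin(\pi s/n)$, a strictly \emph{concave} (not linear) function of the span $s$, maximizing total weight is \emph{not} the same as maximizing total span. For instance, two edges of spans $2,2$ have the same span sum as two edges of spans $1,3$, yet $2\ell_2 > \ell_1+\ell_3$. So even if your gap-counting argument succeeded in showing that the star maximizes the sum of spans (it does), this would not imply that it maximizes the sum of lengths, which is what the lemma asserts. What is actually needed --- and what the paper proves --- is a \emph{simultaneous} bound on the span distribution: for every $i$, the number $N_i$ of tree edges of span at least $i$ satisfies $N_i \le n-2i+1$, which is exactly the star's value of $N_i$. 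The paper derives this from the non-crossing property (two extremal edges of span $\ge i$ cut off two arcs of at least $i-1$ points each that no edge of span $\ge i$ can touch, so these edges form a forest on at most $n-2(i-1)$ points), and then concludes by Abel summation, $L(T)=\sum_i N_i(\ell_i-\ell_{i-1}) \le \sum_i (n-2i+1)(\ell_i-\ell_{i-1}) = L(S_p)$, an argument that works for \emph{any} increasing length function and thus sidesteps the nonlinearity. Your crossing-count idea in the second paragraph is pointed at the wrong target (the span sum) and never arrives at an inequality of this distributional kind.

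The second gap is that your fallback exchange argument is false in the strict form you state it. It is not true that every non-star tree admits a weight-increasing local exchange: there are exponentially many \emph{longest} non-crossing spanning trees that are not stars --- indeed, the entire purpose of Theorem~\ref{T3}, which this lemma serves, is to construct $\Omega(2^n)$ of them (the zigzag paths encoded by $0$--$1$ sequences all have weight exactly $L(S_p)$, and only the two constant sequences yield stars). So ``rerouting through a vertex on the far side strictly increases total length'' must fail on all of those optimal non-star trees. The weaker claim you retreat to --- that some weight-non-decreasing sequence of exchanges terminates at a star --- is the statement you would actually need, but you exhibit no concrete exchange, no termination measure, and you explicitly leave open the one issue (preserving non-crossingness) that you yourself identify as the crux. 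As it stands, the proposal is a plan whose decisive steps are missing, and its main candidate mechanism contradicts a theorem of the paper.
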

\begin{proof}
Write $n=2k$ if $n$ is even, or $n=2k+1$ if $n$ is odd.
Consider an arbitrary non-crossing spanning tree $T$ of $P$.
Notice that the span of any edge is an integer between $1$ and $k$.
For $i=1,\ldots , k$, let $N_i$ denote the number of edges of $T$ whose
span is at least $i$. We need the following claim.

\smallskip
\noindent{\em Claim.}
For every $i=1,\ldots , k$, we have $N_i \leq n-2i+1$.

\smallskip
\noindent{\em Proof of Claim:}
The claim trivially holds for $N_i \leq 1$, so we can assume that
$N_i \geq 2$. One can easily check (since $T$ is non-crossing) that
there exist two edges $p_1 p_2$ and $p_3 p_4$ in $T$ (where $p_2$
could coincide with $p_3$), each of span at least $i$, with the
following properties:
(a) $p_1,p_2,p_3,p_4$ appear in this (clockwise) order on the circle,
and (b) no other edge of $T$ with span at least $i$ has an endpoint
among the points between $p_1$ and $p_2$ or between $p_3$ and $p_4$.
There are at least $i-1$ points on each of the two open circular arcs
defined by $p_1 p_2$ and $p_3 p_4$. Hence all edges
of $T$ with span at least $i$ are induced by at most $n-2(i-1)$ points.
Since this set of edges forms a forest, it has no more than
$n-2(i-1)-1=n-2i+1$ elements, as required.
\qed

\medskip
To finalize the proof of the lemma, we show that $L(T) \leq L(S_p)$.
For $i=1,\ldots ,k$, let $\ell_i$ denote the (Euclidean) length of
an edge with span $i$. Note that $\ell_1<\ell_2<\ldots<\ell_k$, and
define $\ell_0=0$. Put also $N_{k+1}=0$, and observe that the number of edges
of span $i$ is equal to $N_i - N_{i+1}$. Consider first the case $n=2k+1$.
Straightforward algebraic manipulation gives
\begin{align*}
L(T) &= \sum_{i=1}^k (N_i - N_{i+1}) \ell_i =
\sum_{i=1}^k N_i (\ell_i -\ell_{i-1}) \\
&\leq \sum_{i=1}^k (n-2i+1) (\ell_i -\ell_{i-1}) =
2 \sum_{i=1}^k \ell_i = L(S_p).
\end{align*}
Similarly, for $n=2k$ we get
$$ L(T) \leq 2 \sum_{i=1}^{k-1} \ell_i + \ell_k = L(S_p). $$
This concludes the proof of the lemma.
\end{proof}

To finalize the proof of Theorem~\ref{T3}, it remains to show that $P$ admits
$\Omega(2^{n})$ non-crossing spanning trees of weight $L(S_p)$.
Denote the points in $P$ by $p_1, p_2, \ldots, p_n$ in clockwise order.
We construct a family of non-crossing spanning trees on $P$. Start each
tree with the same edge $T := p_1 p_2$. Repeatedly augment $T$ with one
edge at a time, such that the clockwise arc spanned by the latest edge increases by one,
either from its ``left'' (encode this choice by $0$), or from its ``right''
endpoint (encode this choice by $1$). Notice that regardless of the choice,
the latest edge has always the same length. As a result, the clockwise arc
spanning the new tree $T$ now includes one more point: either its right
endpoint moves clockwise by one position, or its left endpoint moves
counter-clockwise by one position. After $n-2$ steps, $T$ is a non-crossing
spanning tree of $P$. If $n=2k+1$, the sequence of edge lengths (including
the first edge, $p_1 p_2$) is
$$ \ell_1,\ldots,\ell_{k-1},\ell_k,\ell_{k-1},\ldots,\ell_1. $$
If $n=2k$, the sequence of edge lengths (including the first edge,
$p_1 p_2$) is
$$ \ell_1,\ldots,\ell_{k-1},\ell_k,\ell_k,\ell_{k-1},\ldots,\ell_1. $$
In both cases, the weight of each tree is $L(S_p)$.
All 0-1 sequences of length $n-2$ yield different trees, so there are at
least $2^{n-2}$ longest non-crossing spanning trees, as required.
This completes the proof of Theorem~\ref{T3}.
\qed

\subsection{Longest non-crossing tours}

We show here that the maximum number of longest non-crossing
spanning cycles is also exponential in $n$.
In the next section we also show that the maximum number of shortest
non-crossing spanning cycles on $n$ points is exponential in $n$
(Theorem~\ref{thm:min-tours}).

\begin{theorem} \label{thm:max-tours}
Let $\sc_{\rm max}(n)$ denote the maximum number of longest non-crossing
spanning cycles that an $n$-element point set can have.
Then we have
$ \Omega(2^{n/3}) \leq \sc_{\rm max}(n) \leq O(68.62^n) $.
\end{theorem}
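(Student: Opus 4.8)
The upper bound requires no new work: a longest non-crossing spanning cycle is in particular a non-crossing spanning cycle, so $\sc_{\rm max}(n) \le \sc(n) = O(68.62^n)$ by Theorem~\ref{th:triVSsc}. The entire task is therefore the lower bound $\sc_{\rm max}(n) = \Omega(2^{n/3})$, and the plan is to exhibit a single $n$-point set realizing it via a gadget construction in the spirit of the longest-matching construction of Theorem~\ref{T2}.

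First I would partition the points into $\lfloor n/3\rfloor$ \emph{triple gadgets}, each consisting of three points contained in a disk of unit diameter, and place these gadgets at the vertices of a large regular $\lfloor n/3\rfloor$-gon of circumradius $R \gg n$, so that any two gadgets are pairwise far apart. Inside each gadget the three points are placed symmetrically with respect to the axis of the big polygon through that gadget's center, and with the incoming and outgoing ``ports'' (the directions toward the two neighboring gadgets) symmetric as well. The point of the symmetry, exactly as in Theorem~\ref{T2}, is that the reflection across this axis maps any non-crossing routing of the tour through the three gadget points to a \emph{distinct} non-crossing routing of \emph{equal} length; so each gadget contributes an independent factor of at least $2$.

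The optimality part is where the real argument lies. I would first bound the length of an arbitrary non-crossing tour: since all pairwise gadget distances are $\Theta(R)$ while all intra-gadget distances are $O(1)$, a tour that uses $t$ inter-gadget edges has length $t\cdot\Theta(R)+O(n)$, and a counting argument (a Hamiltonian cycle visits each gadget, so the number of inter-gadget edges is at most $2\lfloor n/3\rfloor$, with equality forcing the cyclic ``convex-order'' connection pattern) shows, for $R$ large enough, that strictly maximizing the length forces the tour to connect consecutive gadgets of the big polygon and to visit all three points of each gadget in one block. This is the analogue of the inequality ``$2kn+\frac{n}{2} < n^2$'' used in Theorem~\ref{T2} to rule out tours with too few long edges. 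Conditioned on this global pattern, maximizing the length reduces to an independent maximization inside each gadget, whose maximum is attained by exactly the two symmetric routings above. Combining the $\lfloor n/3\rfloor$ independent binary choices yields at least $2^{\lfloor n/3\rfloor}$ tours, all of the same maximum length and all non-crossing, which is the desired $\Omega(2^{n/3})$.

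The main obstacle is precisely this optimality step: one must make the two length scales ($R$ versus $O(1)$) quantitatively separated so that no clever rerouting through a gadget, or across gadgets, beats the intended value, and one must verify that the intra-gadget optimum really does split off as an independent per-gadget choice rather than interacting with the connections to its neighbors. A secondary, routine matter is the divisibility of $n$ by $3$: the leftover one or two points can be absorbed into a fixed, forced portion of the tour (a far-away ``anchor'' edge, as in the ``multiple of $4$ plus $2$'' case of Theorem~\ref{T2}), which changes the count only by a constant factor and leaves the bound $\Omega(2^{n/3})$ intact.
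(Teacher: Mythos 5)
The upper bound half of your proposal is fine and matches the paper. The lower bound, however, has a genuine flaw, and it is exactly at the step you flagged as ``the main obstacle'': the optimality claim is false for your geometry. Your separation of scales points the wrong way for \emph{longest} tours. First note that for your construction to yield more than one non-crossing tour at all, some gadget points must lie in the interior of the convex hull (if all $n$ points are in convex position, the unique non-crossing spanning cycle is the convex polygon, so there is exactly one longest non-crossing tour). But once a gadget has an interior point $a$, the intended ``block'' tours are \emph{not} longest: a non-crossing tour must visit the hull vertices in cyclic order and insert interior points by detours, and inserting $a$ into a hull edge of a \emph{far-away} gadget replaces a local detour of length $O(1)$ by two edges of length $\Theta(R)$, a strict gain of $\Theta(R)$; such a single long detour (two chords from one hull edge to an interior point across the polygon) can be drawn without crossings. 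So ``strictly maximizing the length forces the tour to visit all three points of each gadget in one block'' is exactly backwards --- with tiny clusters that are pairwise far apart, the longest tour wants to maximize, not minimize, the use of inter-gadget edges, and your counting claim that a Hamiltonian cycle has at most $2\lfloor n/3\rfloor$ inter-gadget edges is neither justified for non-crossing tours nor sufficient, since the block tours you count use only $\lfloor n/3\rfloor$ of them. (Your model, clusters far apart, is the right one for \emph{shortest} tours, cf.\ Theorem~\ref{thm:min-tours}, and for the two-cluster longest matching of Theorem~\ref{T2}, where every vertex has degree $1$; it breaks for longest tours, where every vertex has degree $2$ and both of its edges want to be long.)

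The paper's construction is designed precisely to defeat this objection, and its geometry is opposite to yours: the gadgets are not small clusters but long, skinny deltoids $\Delta_i=(a_i,b_i,c_i,d_i)$ (later skinny hexagons), arranged around a common center so that for each \emph{interior} point $a_i$ the farthest points of the \emph{entire} set $Q$ are exactly its own gadget's hull vertices $b_i,c_i,d_i$, which are equidistant from $a_i$. Hence the two tour edges at $a_i$ contribute at most $2|a_ic_i|$ globally, with equality only when $a_i$ connects to two of $\{b_i,c_i,d_i\}$; the convex-hull cyclic order and non-crossing condition then force these two neighbors to be consecutive hull vertices, leaving exactly two equal-length insertions per deltoid ($(b_i,a_i,c_i)$ or $(c_i,a_i,d_i)$), i.e.\ $2^{n/4}$ longest tours, and four insertions per hexagon, i.e.\ $4^{n/6}=2^{n/3}$. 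In short, per-gadget optimality is made to coincide with global optimality by a farthest-point property built into the point set, not by distance-scale separation; without such a property your construction's intended tours are beaten by cross-polygon detours, and the argument collapses.
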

\begin{proof}
Since we are counting non-crossing tours,
we have $\sc_{\rm max}(n) =O(68.62^n)$ by Theorem~\ref{thm:polygonize}.
It remains to show the lower bound. For every $k\in \mathbb{N}$, we
construct a set $Q$ of $4k+1$ points that admits $2^k=\Omega(2^{n/4})$
longest non-crossing tours. We start by constructing an auxiliary
set $P$ of $2k$ points. The auxiliary point set $P$ may contain
collinear triples, however our final set $Q$ does not.
Recall that two segments cross if and only if their relative interiors intersect.
We construct $P=\{c_i,x_i:i=1,2,\ldots , k\}$ with the following properties:
\begin{itemize}\itemsep -2pt
\item[{\rm (i)}] for every $x_i$, the farthest point in $P$ is $c_i$;
\item[{\rm (ii)}] the perfect matching $M=\{c_ix_i:i=1,2,\ldots , k\}$ is non-crossing; and
\item[{\rm (iii)}] the convex hull of $P$ is $\conv(P)=(x_1,c_1,c_2,\ldots , c_k)$.
\end{itemize}
Note that property (i) implies that $M$ is the maximum matching of $P$.

For $k\in \mathbb{N}$, let $\alpha=\frac{\pi}{3k}$.
We construct $P=\{c_i,x_i:i=1,2,\ldots , k\}$ iteratively. During the
iterative process, we maintain the properties that
$$|x_ic_i|>\max_{j<i}|x_ic_j|\hspace{1cm}\mbox{\rm and}\hspace{1cm}|x_{i+1}c_i|>\max_{j<i}|x_{i+1}c_j|.$$
Initially, let $c_1=(0,0)$, $x_1=(2,0)$,
and $x_2=(2-\frac{1}{k},0)$. Let $\vec{\ell}_1$ be a ray emitted by $x_1$ and incident
to $c_1$. Refer to Fig.~\ref{fig:longcycle}.
If $c_i$, $x_i$ and $x_{i+1}$ are already defined, we construct points
$c_{i+1}$ and $x_{i+2}$ (in the last iteration, only $c_{i+1}$)
as follows. Let $\vec{\ell}_{i+1}$ be a ray emitted
by $x_{i+1}$ such that $\angle(\vec{\ell}_{i+1},\vec{\ell}_i) = \alpha$.
Compute the intersections of ray $\vec{\ell}_{i+1}$ with the circle centered at $x_i$
of radius $|x_ic_i|$ and the circle centered at $x_{i+1}$ of radius
$|x_{i+1}c_i|$. Let $c_{i+1} \in \vec{\ell}_{i+1}$ be the midpoint of
the segment between these two intersection points.
This choice guarantees that
$|x_{i+1}c_{i+1}|>|x_{i+1}c_j|$ and $|x_jc_{i+1}|<|x_jc_j|$ for all $j\leq i$.
Now let $x_{i+2}\in c_{i+1}x_{i+1}$ be a point at distance at most
$\frac{1}{k}$ from $x_{i+1}$ such that we have
$|x_{i+2}c_{i+1}|>|x_{i+2}c_j|$ for all $j\leq i$.
This completes the description of $P$.

\begin{figure}[htbp]
\centerline{\epsfysize=1.5in \epsffile{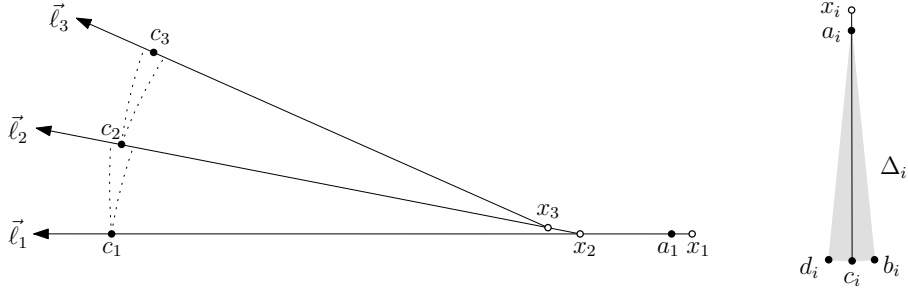}}
\caption{Left: The auxiliary point set $P$ for $k=3$.
Right: A long and skinny deltoid $\Delta_i=(a_i,b_i,c_i,d_i)$.}
\label{fig:longcycle}
\end{figure}

Note that $|x_ix_{i+1}|\leq \frac{1}{k}$ for $i=1,\ldots, k-1$, and so
the points $x_1,\ldots , x_k$ lie in a disk of diameter $1$.
Hence, for every point $x_i$, the farthest point in $P$ is
in $\{c_j:j=1,\ldots , k\}$. By the above construction, the farthest point from $x_i$
in $\{c_j:j=1,\ldots , k\}$ is $c_i$. This proves that $P$ has property (i). It is easy
to verify that $P$ has properties (ii) and (iii), as well.

We now construct the point set $Q$ based on $P$. Let $\delta>0$ be a sufficiently small
constant. For every segment $c_ix_i$ we construct a skinny deltoid
$\Delta_i=(a_i,b_i,c_i,d_i)$, see Fig.~\ref{fig:longcycle},
such that $a_i\in c_ix_i$ is at distance $\delta$ from $x_i$, we have
$|b_ic_i|=|c_id_i|=\delta$, and $|a_ib_i|=|a_ic_i|=|a_id_i|=|c_ix_i|-\delta$.
Since the segments $c_ix_i$ are pairwise non-crossing and $\delta>0$ is small,
the deltoids $\Delta_i$ are pairwise interior disjoint.
Let $Q$ be the set of vertices of all deltoids $\Delta_i$, $i=1,\ldots , k$, and
the point $x_1$. Since $\conv(P)=(c_1,c_2,\ldots , c_k,x_1)$, we have
$\conv(Q)=(b_1,c_1,d_1,b_2,c_2,d_2,\ldots , b_k,c_k,d_k,x_1)$, and
the points $\{a_i:i=1,\ldots , k\}$ lie in the interior of $\conv(Q)$.
If $\delta>0$ is sufficiently small, then the farthest points from $a_i$ in $Q$
are $b_i$, $c_i$, and $d_i$, for every $i=1,2,\ldots , k$.

Every non-crossing tour of $Q$ visits the convex hull vertices in the cyclic order
determined by $\conv(Q)$. We obtain a non-crossing tour by replacing
some edges of $\conv(Q)$ with non-crossing paths visiting the points
lying in the interior of $\conv(Q)$. If we replace either edge $b_ic_i$ or
$c_id_i$ with the path $(b_i,a_i,c_i)$ or $(c_i,a_i,d_i)$, respectively,
for every $i=1,2,\ldots , k$, then we obtain a tour. Let $\H$ be
the set of $2^k$ tours obtained in this way. These tours are
non-crossing, since for every $i$, we exchange an edge of
$\Delta_i$ with a path lying in $\Delta_i$, and the deltoids
$\Delta_i$ are interior disjoint. The tours in $\H$ have the
same length, $L=|\conv(Q)|-k\delta+2\sum_{i=1}^k|a_ic_i|$,
since $|a_ib_i|=|a_id_i|=|a_ic_i|$. It remains to show that this length
is maximal. Note that a non-crossing tour cannot have an edge between two
non-consecutive vertices of $\conv(Q)$. Hence, every edge that
intersects the interior of $\conv(Q)$ must be incident to some point
$a_i$ lying in the interior of $\conv(Q)$. Each $a_i$ is incident to two
edges of a tour: The total length of these two edges is at most $2|a_ic_i|$,
which is attained if $a_i$ is connected to $b_i$, $c_i$, or $d_i$.
In any other case, it is less than $2|a_ic_i|-k\delta$ if $\delta>0$ is
sufficiently small. The total length the edges on the boundary of the
convex hull is less than $|\conv(Q)|$. So any non-crossing tour in
which some point $a_i$ is not connected to $b_i, c_i$ or to $c_i,d_i$
must have length less than $L$. This implies $\sc_{\rm max}(n) = \Omega(2^{n/4})$.

To obtain the asserted bound, we use a skinny hexagon (instead of deltoid $\Delta_i$)
with five equidistant vertices on a circle centered at $a_i$.
We now have four possible ways to insert each $a_i$ into the tour, which implies
$\sc_{\rm max}(n) = \Omega(4^{n/6}) = \Omega(2^{n/3})$.
\end{proof}

Typically for the longest matching, spanning tree or spanning
cycle, one expects to see many crossings. Somewhat surprisingly, we
show that this is not always the case.

\begin{corollary} \label{cor:noncrossing}
For every even $n \geq 2$, there exists an $n$-element point set (in
general position) whose longest perfect matching is non-crossing.
\end{corollary}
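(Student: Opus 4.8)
The plan is to exhibit, for each even $n$, a point set $P$ in general position whose unique maximum-weight perfect matching is non-crossing. The starting observation is that the four-point gadget $S_4=\{a,b,c,d\}$ from the proof of Theorem~\ref{T2} already has this property in miniature: since $c$ lies in the interior of the triangle $abd$, the four points are \emph{not} in convex position, so all three perfect matchings of $S_4$ are non-crossing, and the two heaviest of them, $\{ac,bd\}$ and $\{ad,bc\}$, are tied and non-crossing. This is exactly the feature one needs, because for four points in \emph{convex} position the heaviest matching is always the pair of crossing diagonals: the sum of the two diagonals of a convex quadrilateral strictly exceeds the sum of either pair of opposite sides (the diagonals meet at an interior point and one applies the triangle inequality there). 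This is precisely why maximum matchings are ``typically'' crossing, and why a non-convex gadget is essential. My construction therefore builds $P$ from $n/4$ interior-disjoint copies $S_4^{(1)},\dots,S_4^{(n/4)}$, and takes as the candidate matching $M^\ast$ the union of a maximum matching inside each copy; by the gadget property and interior-disjointness, $M^\ast$ is non-crossing.

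The substance of the argument is to place the copies so that $M^\ast$ is a \emph{global} maximum over \emph{all} perfect matchings of $P$, not merely over non-crossing ones. After fixing the gadget distances (a long scale $\approx 2n$ for the matched edges, a short scale for the rest, as in $S_4$), I would run an exchange argument: for an arbitrary perfect matching $M$, decompose $M\triangle M^\ast$ into alternating cycles and paths and show that, along each such component, the weight of the $M$-edges does not exceed the weight of the $M^\ast$-edges, with equality only when $M$ also uses per-copy maximum matchings. It suffices for this to guarantee a uniform inequality of the form ``every segment joining two distinct copies is no longer than the intra-copy matching edges it would have to displace,'' which, combined with the already-verified internal optimality of a single $S_4$, forces $w(M)\le w(M^\ast)$ and hence makes $M^\ast$ the (essentially) unique longest matching.

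The step I expect to be the real obstacle is exactly this inter-copy length control. The naive layout—collecting all of $a_i,b_i$ into one cluster $L$ and all of $c_i,d_i$ into a second cluster $R$, as in Theorem~\ref{T2}—does \emph{not} work: there every edge between $L$ and $R$ has length $2n+O(1/n)$, and the maximum over all matchings is attained by a \emph{crossing} bipartite matching, namely the order-reversing one that maximizes the vertical offsets between matched points, which strictly beats $M^\ast$. Thus the two-cluster layout (and, as a short computation shows, the equally natural ``copies in separated angular sectors'' layout, where the law of cosines makes inter-copy edges \emph{longer}) must be abandoned. The copies must instead be arranged so that each intra-copy matching edge is genuinely at least as long as every competing inter-copy segment, while all gadgets remain interior-disjoint and the whole set stays in general position; note one cannot simply demand that each point's farthest neighbor be its own partner, since two non-crossing mutually-farthest pairs are already impossible. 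Verifying these simultaneous geometric inequalities, so that the exchange argument closes and every heavy crossing alternative is provably excluded, is the crux. Once a valid placement is found, $M^\ast$ is the unique longest perfect matching and it is non-crossing, proving the corollary; the case $n\equiv 2\pmod 4$ is handled by appending one extra, well-separated gadget pair that must be matched to itself, exactly as in the proof of Theorem~\ref{T2}, and $n=2$ is trivial.
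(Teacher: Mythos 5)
Your proposal correctly identifies what the corollary really demands (maximality over \emph{all} matchings, not just non-crossing ones), correctly rules out the two-cluster layout of Theorem~\ref{T2}, and makes the right structural observation that two non-crossing mutually-farthest pairs cannot coexist. But it stops exactly where the proof has to begin: you never exhibit a placement of the gadgets satisfying your ``inter-copy length control,'' and you explicitly defer it (``once a valid placement is found\dots''). That placement \emph{is} the content of the corollary, so what you have is a correct problem analysis, not a proof. The paper closes this gap by reusing, verbatim, the construction from Theorem~\ref{thm:max-tours}: the point set $\{a_i,c_i : i=1,\dots,k\}$, where all the $a_i$ lie in a disk of diameter $1$ and each partner $c_i$ is placed iteratively, at distance roughly $2$, on rays that rotate by $\alpha=\pi/(3k)$ per step, so that the farthest point from each $a_i$ is its own partner $c_i$. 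Note that this is a \emph{one-sided} farthest-point property --- precisely the correct relaxation of the ``mutually farthest'' condition you rightly observed is impossible --- and it is compatible with non-crossing because the segments $a_ic_i$ form a fan. The one-sided property alone would still not suffice (two pairs pointing in nearly opposite directions give a counterexample in which the $c$--$c$ plus $a$--$a$ matching wins); what closes the exchange argument is the additional fact that the $c_i$ span an angular sector of less than $\pi/3$ as seen from the cluster, so every $|c_ic_j|$ stays near $2$. Since in any competing matching the number of $c$--$c$ edges equals the number of $a$--$a$ edges (each of length at most $1$), and every $a$--$c$ edge $a_ic_j$ is dominated by $|a_ic_i|$, each $c$--$c$/$a$--$a$ pair contributes about $2+1<4\approx |a_ic_i|+|a_jc_j|$, which is exactly the quantitative inequality your exchange argument needs --- supplied here by a concrete construction rather than an unspecified one.

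A second, smaller flaw: handling $n\equiv 2 \pmod 4$ by ``appending one extra, well-separated gadget pair \dots exactly as in the proof of Theorem~\ref{T2}'' does not transfer to the crossings-allowed setting. Theorem~\ref{T2} only needs the extra pair to be self-matched in every maximum \emph{non-crossing} matching. For unrestricted matchings a well-separated pair $\{u,v\}$, with both points at distance $R$ far exceeding $|uv|$ and the cluster diameter, is the worst possible choice: splitting the pair and matching $u,v$ into the main set gains roughly $2R-|uv|$ minus a bounded loss, so the longest matching breaks the pair. The paper's construction sidesteps parity entirely because it is built from $2$-point pairs rather than $4$-point gadgets, and hence works for every even $n$ directly.
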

\begin{proof}
For every $k \geq 1$, consider the set $\{ a_i, c_i : i=1,2,\ldots ,k \}$
of $2k$ points, and its perfect matching $\{a_ic_i: i=1,2,\ldots , k\}$
in the proof of Theorem~\ref{thm:max-tours}. This is the longest perfect
matching (over all perfect matchings of the point set, crossing or non-crossing),
and moreover, it is non-crossing.
\end{proof}

\section {Tours}\label{sec:tours}

In this section we derive estimates on the maximum multiplicity of
the shortest (minimum) and, respectively, the longest (maximum) Hamiltonian
tour on $n$ points (crossings allowed).
While the shortest tour has the non-crossing attribute for
free, the longest tour typically has crossings.

\subsection{Shortest tours}
\begin{theorem} \label{thm:min-tours}
Let $\sc_{\rm min}(n)$ denote the maximum number of shortest tours that
an $n$-element point set can have.
\begin{itemize}
\item [{\rm (i)}] For points in convex position, there is exactly
  one shortest spanning cycle, \ie $\sc_{\rm min}(n)=1$.
\item [{\rm (ii)}] For points in general position, $\sc_{\rm min}(n)$
is exponential in $n$. More precisely,
$ 2^{\lfloor n/3 \rfloor} \leq \sc_{\rm min}(n) \leq O(68.62^n) $.
\end{itemize}
\end{theorem}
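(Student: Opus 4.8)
The plan is to treat the two parts separately, using throughout the elementary fact (already announced in the text as ``the shortest tour has the non-crossing attribute for free'') that every shortest tour is non-crossing: if two tour edges crossed, the standard 2-opt uncrossing move---reconnecting the four endpoints along the complementary pair of segments and reversing the intervening sub-path---produces another spanning cycle that is strictly shorter by the triangle inequality, contradicting minimality. For part (i), I would combine this with the observation that for a point set $P$ in convex position the \emph{only} non-crossing spanning cycle is the boundary of $\conv(P)$ (a non-crossing Hamiltonian cycle is a simple polygon on all the points, and on convex position such a polygon must be the convex one). Since some shortest tour is non-crossing, it is forced to be this boundary cycle; hence there is exactly one shortest tour and $\sc_{\rm min}(n)=1$.

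For the upper bound in part (ii), the argument is immediate: shortest tours are non-crossing, so for any point set they form a subset of its non-crossing spanning cycles, whence $\sc_{\rm min}(n)\le \sc(n)=O(68.62^n)$ by Theorem~\ref{thm:polygonize}.

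For the lower bound I would mirror the gadget construction of Theorem~\ref{thm:max-tours}, but tuned so that a distinguished interior point of each gadget admits several \emph{equally short} ways of being spliced into the tour. First I would build a deltoid-type gadget on four points $a_i,b_i,c_i,d_i$, with $b_i,c_i,d_i$ consecutive on the common convex hull and $a_i$ a nearby interior point placed symmetrically (on the angle bisector at $c_i$), so that $a_i$ can be inserted into the cyclic order either between $b_i$ and $c_i$ or between $c_i$ and $d_i$ at identical cost. Arranging $k$ such gadgets along a convex curve, with the clusters far apart relative to the gadget scale $\delta$, yields $2^k=\Omega(2^{n/4})$ tours of equal length. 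To reach $\Omega(2^{n/3})$ I would replace each deltoid by a skinny hexagon-type gadget---an interior point $a_i$ together with five surrounding points---exactly as in the final step of Theorem~\ref{thm:max-tours}, so that each interior point can be spliced in four equally cheap ways; with six points per gadget this gives $4^k=\Omega(4^{n/6})=\Omega(2^{n/3})$. At most two additional points, appended as a fixed gadget, absorb the cases $n\not\equiv 0\pmod 3$ and account for the floor in $2^{\lfloor n/3\rfloor}$.

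The routine part is laying out the gadgets in general position and counting the $2^{\lfloor n/3\rfloor}$ candidate tours; the real work is the \emph{minimality} argument, which is the main obstacle. I must show both (a) that all candidate tours have exactly the same length, and (b) that this common value is the global minimum. For (a), the symmetry of each gadget makes the local splicing cost independent of the chosen option. For (b), I would use that any shortest tour is non-crossing, hence visits the hull vertices in their convex order and must join each interior point $a_i$ to the tour by two edges incident to $a_i$; a distance estimate---choosing the inter-gadget separation large and the gadget scale $\delta$ small---forces each $a_i$ to be connected only to points of its own gadget and only via one of the enumerated minimum-cost insertions. Making these inequalities \emph{strict}, so that no cross-gadget or otherwise degenerate routing can tie or beat the intended length, is the delicate point and dictates the precise placement of the points.
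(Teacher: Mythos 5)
Your treatment of part (i) and of the upper bound in part (ii) coincides with the paper's: shortest tours are non-crossing by the 2-opt exchange, hence forced to be the hull cycle for points in convex position, and in general there are at most $\sc(n)=O(68.62^n)$ of them by Theorem~\ref{thm:polygonize}. Your lower-bound construction, however, takes a genuinely different route. The paper uses no interior points and no hull-order machinery at all: it places $n/3$ copies of a tiny isosceles triangle with sides $\eps,\eps,\eps/4$ around a unit circle, each oriented so that its axis of symmetry passes through the entry vertex (the apex $a$) of the next copy; for $\eps$ small enough, a clustering/exchange argument shows that every shortest tour visits each triangle contiguously, entering at its apex, and each triangle can then be traversed in exactly two equally short orders ($a,b,c$ or $a,c,b$), which gives $2^{\lfloor n/3\rfloor}$ directly. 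You instead recycle the hexagon gadget of Theorem~\ref{thm:max-tours} (one interior point with four equally cheap insertion slots among five consecutive hull vertices), and your minimality argument --- non-crossing forces hull order, and a separation of scales forces each interior point to be spliced in alone, between consecutive hull vertices of its own gadget --- is sound, yielding $\Omega(4^{n/6})=\Omega(2^{n/3})$, the same exponential rate. The one concrete flaw is your remainder handling: with six points per gadget the relevant residues are mod $6$, not mod $3$, so ``at most two additional points'' cannot absorb all cases. For $n\equiv 3\pmod 6$, any mix of insertion gadgets leaves you short: a gadget on $g$ points offers only $g-2$ choices, and $g-2\le 2^{g/3}$ with equality only at $g=6$, so the best achievable count is at most $2^{\lfloor n/3\rfloor-1}$, a factor $2$ below the stated bound $2^{\lfloor n/3\rfloor}$ (though still $\Omega(2^{n/3})$, which is the substance of the theorem). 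Recovering the exact constant would require a three-point gadget with two choices, i.e., essentially the paper's triangle clusters, whose cluster-contiguity analysis lies outside your insertion framework. What your route buys is reuse of machinery already built for longest tours; what the paper's buys is a simpler, self-contained construction whose three-point granularity matches the floor in the statement exactly.
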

\begin{proof}
\noindent {\rm \bf (i)}
It is well known (and easy to prove) that a shortest tour of a convex
point set does not admit any crossings, thus the vertices have to be
visited in clockwise or counterclockwise order.

\smallskip
\noindent {\rm \bf (ii)} Consider now a set $S$ of $n$ points in general
position. Since any minimum tour of $S$ is a non-crossing spanning cycle of $S$,
by Theorem~\ref{thm:polygonize} we have $\sc_{\rm min}(n) \leq \sc(n) = O(68.62^n)$.

A lower bound of $2^{\lfloor n/3 \rfloor}$ is given by the following construction.
Assume first that $n$ is a multiple of $3$. Consider an isosceles
triangle $\Delta{abc}$ with sides $\eps$, $\eps$ and $\eps/4$,
see Fig.~\ref{f2} (left).
\begin{figure}[htbp]
\centerline{\epsfysize=2.1in \epsffile{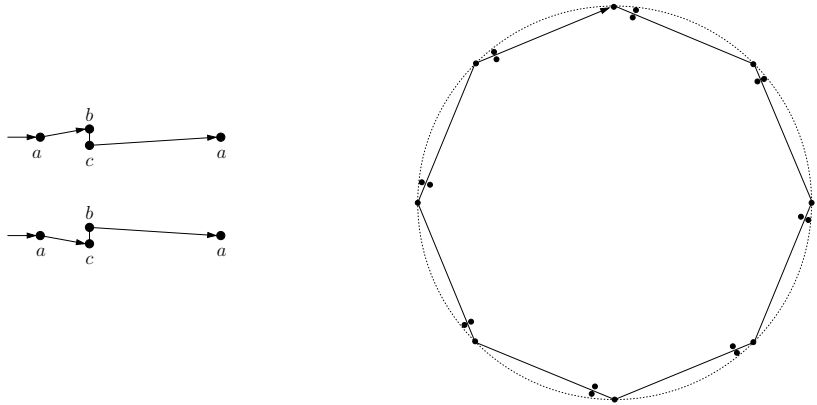}}
\caption{Lower bound for shortest tours.}
\label{f2}
\end{figure}
Let $S$ consist of $n/3$ rotated copies of $\Delta$
along a circle of unit radius, as in Fig.~\ref{f2} (right).
If $\eps$ is small enough, the shortest tour of $S$ must visit the three
vertices in each copy of $\Delta$ in their order along the circle.
We can assume w.l.o.g.\ that the groups (copies) are visited in clockwise order.
Then the first point visited in each group is (a rotated copy of) $a$.
Since each group of three points can be minimally traversed in two
ways, $a,b,c$, or $a,c,b$, the number of shortest tours is $2^{n/3}$.
The construction can be easily modified for any $n$,
by removing one or two points from one of the groups.
\end{proof}

\subsection{Longest tours for points in convex position}

In the following we give tight bounds for the number of maximum tours
on $n$ points in convex position (allowing crossings) and outline how to compute such
tours efficiently. The result for odd $n$ seems to have been first
discovered by Ito~\etal\cite{IUY01}. Nevertheless we present
our own simpler proof here for completeness.

\begin{theorem} \label{Thm-T}
Let ${\tt tc}_{\rm max}(n)$
denote the maximum number of longest tours that
an $n$-element point set in convex position can have.
For $n$ odd we have ${\tt tc}_{\rm max}(n)=1$ and the (unique) longest tour is a
thrackle. For $n$ even we have ${\tt tc}_{\rm max}(n)= n/2$.
\end{theorem}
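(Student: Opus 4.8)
The plan is to reduce the \emph{metric} optimization (maximizing Euclidean length) to a purely \emph{combinatorial} one that depends only on the cyclic order of the points, namely maximizing the number of self-crossings of the tour. The single tool driving this reduction is the convex quadrilateral inequality: for any four points $a,b,c,d$ in convex position (in this cyclic order) the two diagonals beat either pair of opposite sides, $|ac|+|bd|>|ab|+|cd|$ and $|ac|+|bd|>|bc|+|ad|$. First I would recall that any longest tour is \emph{$2$-opt optimal}: removing two edges of a Hamiltonian cycle and reconnecting the two resulting paths the unique other way again gives a Hamiltonian cycle, and by the quadrilateral inequality such an exchange strictly increases the length precisely when it replaces a non-crossing independent pair of edges by a crossing one. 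Consequently a longest tour admits no length-increasing $2$-opt move, and one shows this forces it to attain the maximum possible number of crossings among all Hamiltonian cycles on $n$ convex points, a quantity independent of the exact coordinates. Thus it suffices to (a) count the crossing-maximal Hamiltonian cycles combinatorially, and (b) exhibit the regular $n$-gon as a configuration on which every crossing-maximal tour is actually longest, so that the combinatorial count is attained.

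For the odd case $n=2k+1$ I would argue that a crossing-maximal tour is a \emph{thrackle}: if some independent pair of edges failed to cross, an exchange would create a new crossing, contradicting maximality. Labelling the points $0,\dots,n-1$ around the hull, the two span-$k$ neighbours of a vertex $v$ are $v+k$ and $v+k+1 \bmod n$, so a Hamiltonian thrackle must use only these maximal-span edges. Since $\gcd(k,2k+1)=1$, the span-$k$ graph is $2$-regular and connected, hence a single cycle: the star polygon $\{n/k\}$. It is therefore the unique Hamiltonian cycle all of whose edges have span $k$, it is a thrackle, and on the regular $n$-gon its edges are the longest chords, so it is genuinely longest. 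This gives ${\tt tc}_{\rm max}(n)=1$ with the longest tour a thrackle.

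For the even case $n=2k$ the obstruction is that no Hamiltonian thrackle exists: the span-$k$ edges (the $k$ diameters $\{v,v+k\}$) form a perfect matching rather than a cycle, so they cannot close up by themselves. The plan is to show that a crossing-maximal tour must contain all $k$ diameters and is completed into a single cycle by a second matching of next-longest (span $k-1$) edges; on the regular $n$-gon these are exactly the longest tours. Counting the admissible completions, equivalently the phases in which the $k$ diameters can be threaded into one cycle, yields exactly $k=n/2$ distinct tours, matching the smallest case $n=4$ (the square has the two tours $1$-$2$-$4$-$3$ and $1$-$3$-$2$-$4$, each using both diagonals and one pair of opposite sides). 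The regular $n$-gon realizes all $n/2$ of them, and the combinatorial upper bound shows no convex configuration can do better.

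The hard part will be making steps (a) and (b) rigorous: proving that \emph{every} longest tour (not just the regular-$n$-gon one) is crossing-maximal, since a single $2$-opt move need not simultaneously increase both length and crossing number. Whether the forced reconnection produces the crossing diagonals or the other (non-crossing) pair of opposite sides depends on how the two chosen edges interleave along the tour, and this is exactly what keeps the even case from being a thrackle. I would handle it by selecting the exchanged pair carefully, for instance two non-crossing independent edges that are threaded \emph{consecutively} along the tour, so that the unique reconnection yields the diagonals and strictly increases the length. The second delicate point is the exact enumeration for even $n$: establishing that the number of crossing-maximal (hence longest) cycles is precisely $n/2$ and not merely $O(n)$, which I would obtain by analysing how the mandatory $k$ diameters constrain the remaining edges and checking that distinct phases give distinct tours while no additional freedom remains.
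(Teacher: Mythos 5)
Your proposal has two genuine gaps, one of which is fatal in the even case.

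First, the reduction from length-maximality to crossing-maximality is never established, and the obstacle is exactly where you flag it. The quadrilateral inequality pays off only when the unique cycle-preserving 2-opt reconnection produces the two \emph{crossing} diagonals. For an independent non-crossing pair of oriented edges $x\ar y$ and $u\ar v$, the reconnection replaces them by $xu$ and $yv$; if these two segments also fail to cross (the pair the paper calls \emph{parallel}, as opposed to \emph{anti-parallel}), the move trades one pair of opposite sides of a convex quadrilateral for the other and need not increase length. So 2-opt optimality only forbids anti-parallel pairs --- this is precisely the paper's Observation (a) --- and cannot by itself force crossing-maximality. Your suggested fix (exchanging a non-crossing pair that is ``threaded consecutively'') handles only the anti-parallel situation again. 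Worse, for even $n$ every longest tour genuinely \emph{does} contain $n/2-1$ non-crossing (parallel) pairs, so any argument of the form ``a non-crossing pair yields an improving exchange'' must fail; the paper instead needs two additional structural facts (parallel pairs force consecutive endpoints; every edge of a longest tour has span at least $k-1$) before any enumeration can begin, and it is hard to see how your reduction avoids redoing that work.

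Second, and fatally, your structural claim for even $n=2k$ is false: a longest (equivalently, in this regime, crossing-maximal) tour does \emph{not} contain all $k$ diameters. It contains exactly \emph{two} edges of span $k$, all remaining $n-2$ edges having span $k-1$ (see Fig.~\ref{fig:longtour10}: each longest tour on the regular 10-gon has exactly two bold span-5 edges). Already for $n=6$ the configuration you describe does not exist: the span-2 edges split into the two vertex-disjoint triangles on $\{1,3,5\}$ and $\{2,4,6\}$, which contain no perfect matching, so ``all three diameters plus a span-2 matching'' is not realizable as a graph, let alone as a Hamiltonian cycle. The natural all-diameter tour $1,4,2,5,3,6$ has spans $3,2,3,2,3,1$; on the regular hexagon its length is $7+2\sqrt{3}$, strictly less than the length $4+4\sqrt{3}$ of a true longest tour such as $5,3,1,4,6,2$, and it has only $6$ crossings versus $7$, so it is neither longest nor crossing-maximal. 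Your $n=4$ sanity check is misleading because only for $n=4$ do ``two span-$k$ edges'' and ``all $k$ diameters'' coincide. Consequently your enumeration scheme counts the wrong family of tours; the correct count of $n/2$ arises, as in the paper, from the $n/2$ possible positions of the two span-$k$ edges, whose endpoints are forced to be consecutive on the hull.
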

\begin{proof}
Let $P$ be a set of $n$ points in convex position. Assume that $n=2k$
if $n$ is even, and $n=2k-1$ if $n$ is odd.
We pick an orientation for every possible tour arbitrarily.
 According to this orientation we denote an edge $xy$ of a tour as
 $x\ar y$ if $x$ is the
predecessor of $y$ in the tour. We call two edges $x\ar y$ and $u\ar
v$ \emph{parallel} if they are disjoint (including their endpoints)
and the segments $xu$ and $yv$ are disjoint. 
For instance, the edges $a_3\ar a_2$ and $b_2 \ar b_3$ in
Fig.~\ref{fig:eventour}(b) are parallel. Two disjoint edges that are not
parallel are called \emph{anti-parallel}.
We say two vertices of $P$ are \emph{consecutive} if they define a convex hull edge.
To prove the theorem we first prove three simple observations.

\medskip\noindent
{\it Observation (a): The longest tour does not contain
a pair of anti-parallel edges.}\\ 
Suppose the longest tour does contain anti-parallel edges $x\ar y$ and
$u \ar v$. We delete both edges and construct a new tour by adding
$x\ar u$ and $y\ar v$ and re-orienting the part of the old tour
between $u$ and $y$. The new tour is longer since in a convex
quadrilateral the sum of the diagonals exceeds the sum of two opposing
side lengths. 

\medskip\noindent
{\it Observation (b): In a longest tour with parallel edges
$x\ar y$ and $u\ar v$, the vertices $x,u$ and the vertices $y,v$ are
consecutive. In particular, a longest tour contains no triplet of
pairwise parallel edges.} \\ 
Assume that there exists a vertex $w$ ``in between'' $x$ and $u$
with predecessor $w'$. Then the edge $w'\ar w$ 
would determine a pair of anti-parallel edges with either 
$x\ar y$ or $u\ar v$, in contradiction to Observation (a).
By a similar argument one can show that $y$ and $v$ have to be consecutive.

\medskip\noindent
{\it Observation (c): The span of every edge in a longest
tour is at least $k-1$.} \\
Suppose, to the contrary, that the longest tour contains an edge
$x \ar y$ with span at most $k-2$. By the pigeonhole principle
there is at least one edge $u\ar v$ that does not cross $x \ar y$.
Due to Observation (b), $uv$ has to be an edge of span at least $n-k$.
Let $\{L,R\}$ be the partition of $P\setminus \{x,y\}$ induced by $x \ar y$
(that is, points left and right relative to $x \ar y$). Assume further that
$|R|>|L|$, that is, $\{u,v\}\subseteq R$. By Observation~(b),
a longest tour cannot contain three pairwise parallel edges. Hence,
in the longest tour the points in $R\setminus \{u\}$ have to have a
successor from $L\cup \{x\}$. But, again due to the pigeonhole principle,
this is impossible, since $|R|-1 \geq k-1$, but $|L|+1 \leq k-2$. 

\medskip
We now proceed with the proof of Theorem~\ref{Thm-T}. Assume first that $n=2k-1$ is odd. 
Due to observation (c), the longest tour does not have any edge of span at most $k-2$,
hence the span of every edge is $k-1$, which is the largest
possible span for $n$ odd. This determines the longest geometric tour.
For every vertex we have only two possible edges with span
$k-1$, hence the longest tour is unique.
\begin{figure}[htbp]
\begin{center}
\begin{tabular}{cp{15mm}c}
  \includegraphics[width=.28\columnwidth]{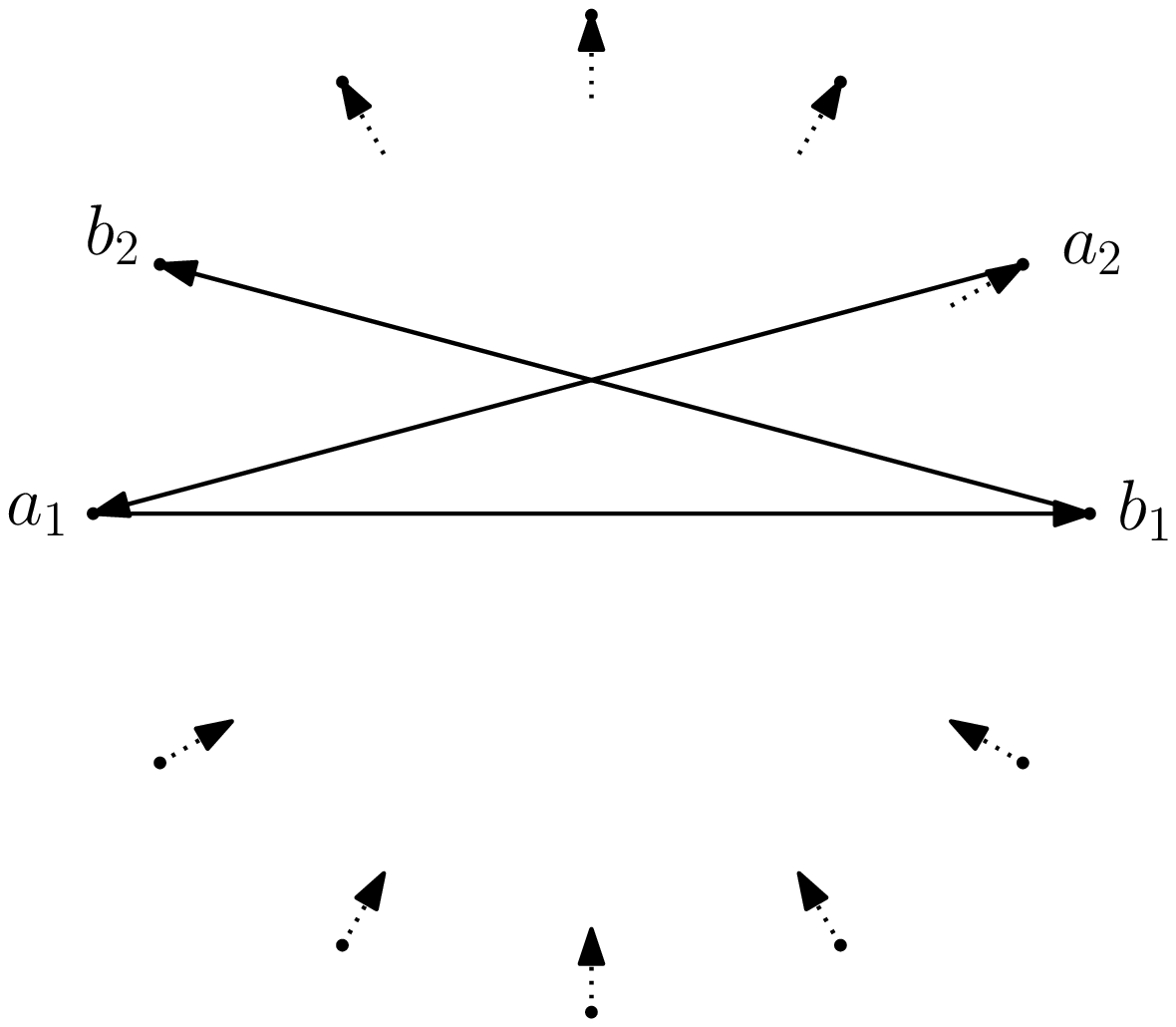} & &
  \includegraphics[width=.28\columnwidth]{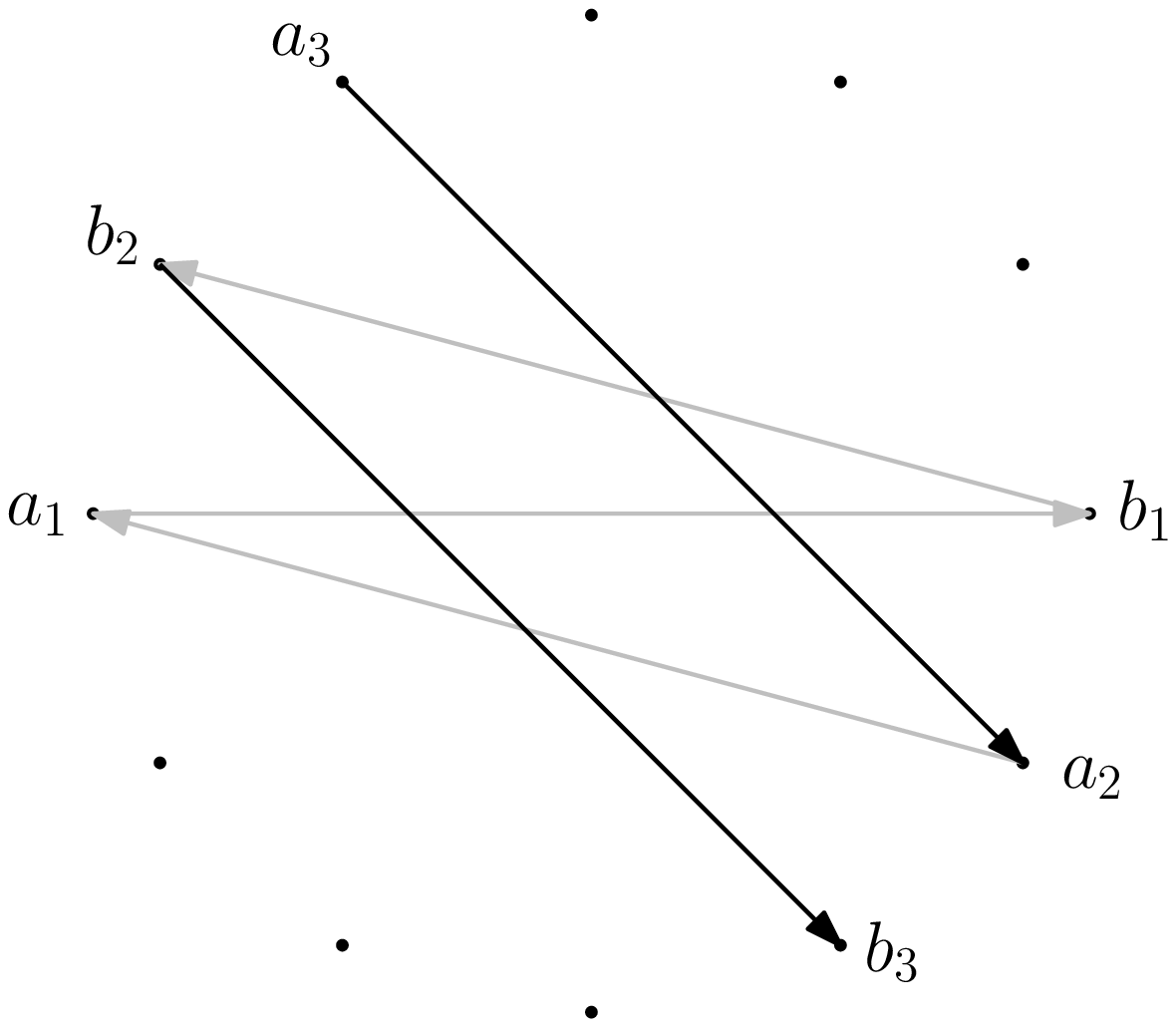} \\
  (a) && (b)
\end{tabular}
\caption{(a) A configuration that does not allow an extension to a
longest tour, since the possible 5 ``outgoing'' edges have only 4
candidates to connect with. (b) A valid extension. The edges
$a_{i+1}\ar a_i$ and $b_i\ar b_{i+1}$ form a pair of parallel
edges.}
\label{fig:eventour}
\end{center}
\end{figure}

Now assume that $n=2k$ is even. Notice that we have to use at least one
edge of span $k$, since otherwise (using edges with span $k-1$ only)
there would be two anti-parallel edges. Assume that a longest tour
contains the edge $a_1\ar b_1$ of span $k$. Let $a_2$ denote the predecessor of $a_1$,
and $b_2$ denote the successor of $b_1$ in the longest tour. The edges $a_2\ar a_1$ and
$b_1\ar b_2$ need to have span $k-1$.
There are two possibilities for the relative position of $a_2a_1$ and $b_2b_1$.
In case they cross, all outgoing edges from the vertices on the arc
between $a_1$ and $b_1$ (in cyclic order of $P$, opposite from $a_2$ and $b_2$)
have to cross $b_1\ar b_2$. This yields a contradiction by
the pigeonhole principle (see Fig.~\ref{fig:eventour}(a)).
Thus we can assume that $a_2$ and $b_2$ are located on opposite sides of
$a_1\ar b_1$, as in Fig.~\ref{fig:eventour}(b). We extend the partial tour step by step,
choosing new edges $a_{i+1}\ar a_i$, $b_i\ar b_{i+1}$ with span $k-1$
appropriately. If we would use an edge with span $k$ we would
``close'' the spanning cycle before all vertices have been visited.
We stop after we found $n-1$ edges and add the final
edge $b_{n/2}\ar a_{n/2}$ with span $k$. To see that this scheme works
correctly, notice that the edges added in every step form a pair of
parallel edges with span $k-1$, which ``rotates'' around $P$. The
construction can be characterized by the location of the edges with
span $k$. Since $a_1$ and $b_{n/2}$ are consecutive, as well as
$a_{n/2}$ and $b_{1}$, we have at most $n/2$ choices to select these
two edges.

If $P$ is the vertex set of a regular $n$-gon, then $P$ has
$n/2$ longest tours by rotational symmetry. The 5 longest tours for
$n=10$ are depicted in Fig.~\ref{fig:longtour10}.
\end{proof}

\begin{figure}[htbp]
\centerline{\epsfysize=1.2in \epsffile{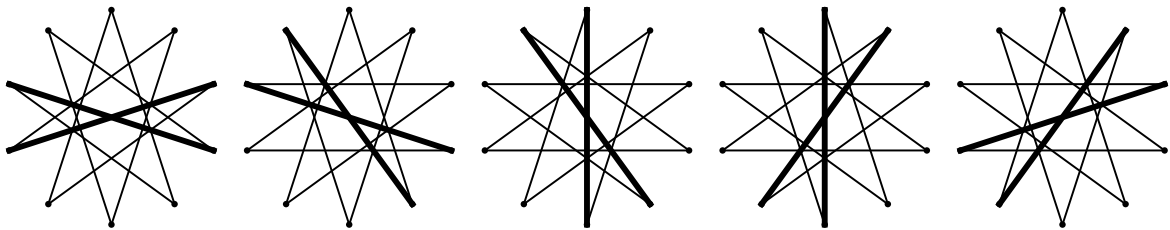}}
\caption{The 5 longest tours on the vertex set of a regular
  10-gon. The two edges with span $5$ are drawn with bold lines.}
\label{fig:longtour10}
\end{figure}

As an immediate algorithmic corollary of
Theorem~\ref{Thm-T}, the longest tours on a set of $n$
points in convex position can be computed in $O(n \log{n})$ time. 
If $n$ is odd and the convex polygon $P=p_0,p_1,\ldots,p_{n-1}$ is given, 
the tour can be computed in $O(n)$ time: start with $i=0$,
and iteratively set $i \leftarrow i+\frac{n-1}{2}$, where the indices
are taken modulo $n$.
If $n$ is even, we have $n/2$ candidates for the longest tour. 
Given a candidate, we can construct the next candidate by exchanging
only 4 edges of the tour. Thus, after computing the weight of the first candidate,
all other candidates can be evaluated in $O(1)$ per tour.
As in the case of odd $n$, we can compute the longest tours in
$O(n)$ time if the cyclic order of $P$ is given, and in $O(n\log n)$ time otherwise.

\section{Conclusion} \label{sec:conclusion}

Our investigations leave some questions open:

\smallskip
1. We used a special point configuration, $D(n,3^r)$, to prove a new lower
bound on $\tr(n)$, the maximum number of triangulations. The question arises if
this (or a similar) configuration could also give better lower bounds for other
quantities we studied, such as $\cf(n)$, $\st(n)$, or $\sc(n)$.
However, no analysis for other geometric graph classes has been done
so far, not even for the simpler-to-analyze \emph{double zig-zag chain} $D(n,1^r)$,.

\smallskip
2. While for the maximum number of triangulations, the upper and lower
bounds are reasonably close, this is not the case for non-crossing
spanning cycles. Recall that the current best lower bound, offered
by the double chain configuration is $\Omega(4.462^n)$, and our new upper bound
is only $O(68.62^n)$. Most likely, the lower bound is closer to the
truth, however the arguments are missing to justify this belief.
Clearly a large gap remains to be covered.

\smallskip
3. Is it possible to obtain upper bounds for the maximum
multiplicity of weighted configurations, better than those for the
corresponding unweighted structures? For example, is $\sc_{\rm max}(n)$
significantly smaller than $\sc(n)$, or $\pm_{\rm min}(n)$
significantly smaller than $\pm(n)$?

\smallskip
4. For points in convex position, we devised an $O(n\log n)$-time
algorithm for computing the longest tour.
However, for points in general position it is not known
whether this problem is NP-hard.
Barvinok~\etal\cite{BFJ+98} showed that the problem is solvable in
$O(n)$ time under the $L_1$ metric~\cite{BFJ+98}.
On the negative side, they also show that the problem is
NP-complete under the $L_2$ metric for points in 3-space.
Nothing is known about the complexity of computing the
longest non-crossing spanning tour, spanning path,
perfect matching or spanning tree. However, constant ratio
approximations are available for the last three problems;
see~\cite{ARS95,DT10}.

\end {document}